\newtheorem{thm}{Theorem}[section]
\newtheorem{cor}[thm]{Corollary}
\newtheorem{lem}[thm]{Lemma}
\newtheorem{pro}[thm]{Proposition}
\newtheorem{defn}[thm]{Definition}
\newtheorem{rmk}[thm]{Remark}
\DeclareMathOperator*{\supp}{supp}
\DeclareMathOperator*{\argmin}{arg\,min}
\newcommand{\my}[1]{{\color{black}{#1}}}
\newcommand{\mymy}[1]{{\color{black}{#1}}}
\newcommand{\myIII}[1]{{\color{black}{#1}}}
\begin{document}
%
\title{Orthogonal Matching Pursuit with Thresholding and its
  Application in Compressive Sensing}
%
%
%

\author{Mingrui~Yang,~\IEEEmembership{Member,~IEEE,}
        and~Frank~de~Hoog
\thanks{Mingrui Yang is with Autonomous System Lab, Digital
  Productivity Flagship, CSIRO, Australia (email: mingrui.yang@csiro.au).}
\thanks{Frank de Hoog is with Digital
  Productivity Flagship, CSIRO, Australia.}
}

\maketitle

\begin{abstract}
Greed is good. However, the tighter you squeeze, the less you
have. In this paper, a
less greedy algorithm for sparse signal reconstruction in compressive sensing, named orthogonal matching pursuit with
thresholding is studied. Using the
global 2-coherence , which provides a ``bridge" between the well known mutual coherence and the restricted isometry constant, the performance
of orthogonal matching pursuit with thresholding is analyzed and more
general results for sparse signal reconstruction are obtained. It is also
shown that given the same assumption on the coherence index and the
restricted isometry constant as required for orthogonal matching
pursuit, the thresholding variation gives exactly the same
reconstruction performance \mymy{with significantly less complexity}. 


\end{abstract}

\begin{IEEEkeywords}
Compressive sensing, mutual coherence, global 2-coherence, restricted isometry
property, orthogonal matching pursuit (OMP), orthogonal matching
pursuit with thresholding (OMPT).
\end{IEEEkeywords}

%
\IEEEpeerreviewmaketitle

\section{Introduction}

 

\IEEEPARstart{C}{ompressive} sensing (CS) \cite{CandesRombergTao:06,
  Candes2006:Optimal, Do} is a recently developed and fast growing field
of research. Given that the signal of interest
is sparse in a certain basis or tight frame, it provides a new sampling scheme that breaks the
conventional Shannon-Nyquist sampling rate
\cite{Shannon1949:samplingtheorem}, which requires sampling at a rate
at least twice the bandwidth of the signal for successful
recovery. 
\my{In its simplest form, compressive sensing addresses the problem of
finding the sparsest problem solution to a set of underdetermined
equations. That is, it addresses the following $\ell_0$ minimization problem
\begin{align}\label{eqn:l0_minimization}
  \min_a \|a\|_0 \mbox{ subject to } f = \Phi a.
\end{align}
where $\|a\|_0$ denotes the $\ell_0$ ``norm'' of $a$, which counts the
number of nonzero elements of $a$, and $\Phi \in \mathbb{R}^{n\times
  d}$ ($n \ll d$). The vector $a$ is said to be $k$-sparse if $\|a\|_0
\le k$.} Candes and Tao~\cite{Candes2005:LP} have established that it
is sufficient to require all submatrices consisting of
arbitrary $2k$ columns of $\Phi$ to have full rank for the $\ell_0$
minimization problem~\eqref{eqn:l0_minimization} to have a unique
$k$-sparse solution. However, finding this solution is, in general, an
NP-hard problem. 

Fortunately, researchers have proposed several approaches to address this
problem, which fall into two main categories. The
first one is to relax the $\ell_0$ minimization problem to an $\ell_1$
minimization
problem~\cite{Candes2005:LP,Candes:08,Foucart2009395,Foucart201097,5290058,5550400,Mo2011460},
which can be solved in polynomial time. The other stream of work is to
use heuristic approaches, such as greedy algorithms, to approximate
the solution of the $\ell_0$ minimization problem~\cite{Gilbert:2003:AFO:644108.644149, DET2, 1337101, DET,
  Needell2009301, 4839056, 5419092, Blumensath2009265,
  Foucart:2011:HTP:2340478.2340494, 6145475}. The analyses of all of these
algorithms depend on properties of the sampling (sensing) matrix
$\Phi$ and two important metrics here are coherence measures and the restricted isometry constant (RIC), both of which are defined below. A useful bridge between these two metrics has been defined in~\cite{YangdeHoog:2014} and 
used to
study the reconstruction performance of the weak orthogonal matching
pursuit (WOMP) and the orthogonal matching pursuit (OMP).

In this paper, we continue the study of greedy type algorithms, but in a
different flavor. OMP updates an $s$-term approximation of the measurement vector $f$ a step
at a time, adding to an existing $(s-1)$-term set a new term in a
greedy fashion, aiming to
minimize the $\ell_2$ error over all possible combinations of the $s$
terms. However, it is known (see for
instance~\cite{Temlyakov:11}) that the most computationally expensive
step of all greedy algorithms is the greedy step, which calculates in
each iteration the inner products between the residual and all the
atoms from the dictionary and finds the maximum of them. In this
sense, greed is good,  but less greed could be better. 

Here we study a thresholding greedy algorithm called orthogonal matching pursuit with thresholding
(OMPT), which replaces
the expensive greedy step by a thresholding step. It
only needs to calculate the norm of the residual once in each
iteration and uses it as a threshold. We show that by carefully
choosing the thresholding parameter, OMPT is able to recover the $k$
correct support of the ideal signal in presence of noise,
and obtain exact recover of the $k$-sparse signal in noiseless case,
both in $k$ iterations. In addition, by applying the global 2-coherence~\cite{YangdeHoog:2014}, we show that it maintains exactly the same
reconstruction performance as OMP under the same assumptions on
coherence indices and the RIC, for both noisy and noiseless
scenario. 

\my{The main contributions of this paper can be summarized as
  follows. Greedy algorithms such as OMP and WOMP have been
  studied intensively in the signal processing community. However,
  few thresholding type greedy algorithms have been studied for
  CS. In this paper we analyze the recovery performance and
  convergence of OMPT using the global 2-coherence and the RIC and show
  that OMPT retains exactly the same recovery performance as OMP given
  the same assumption on these two metrics. Specifically, in
Theorem~\ref{thm:ompt_noiseless} and Theorem~\ref{thm:ompt_noisy}, the
recovery properties for OMPT on sparse signals are established for noiseless and noisy cases
respectively, given optimal choices for the threshold
parameter. The convergence of OMPT in presence of noise for
general choice of the threshold parameter
is then analyzed in Theorem~\ref{thm:ompt_convergence}. It is also shown in Corollary~\ref{cor:ompt} that by
carefully choosing the threshold, OMPT has the same reconstruction
performance as OMP. Precisely, the bound on the RIC for OMPT to succeed
is exactly the same as the best known bound for OMP established
in~\cite{YangdeHoog:2014}. As far as we are aware, these results have not been presented in the literature previously.}

\section{Preliminaries and Notation}

Before moving on to the main results of this paper, we need some
preliminaries and notation. Without loss of generality, assume that the columns of the
matrix $\Phi$ are normalized such that for any column
$\phi_i\in\Phi$, $\|\phi_i\|_2 = 1$. We sometimes also refer to the matrix
$\Phi$ as a dictionary, whose columns are called atoms. 

\subsection{Some Notation}
\begin{itemize}
    \item $A_1(\Phi)$: the closure of the convex hull of $\Phi$. Specifically, $A_1(\Phi) = \{g : g = \sum_i
c_i \phi_i, \; \phi_i \in \Phi, \; \sum_i|c_i| \le 1\}$
  \item $\supp(a)$: the support of $a\in\mathbb{R}^d$ is the index set where
the elements of $a$ are nonzero

  \item $|\Lambda|$: the cardinality of the set $\Lambda$

  \item $\Phi_\Lambda$: the sub-dictionary of $\Phi$ with the indices of atoms
restricted to the index set $\Lambda$

  \item $a_\Lambda$: the sub-signal (in $\mathbb{R}^{|\Lambda|}$) of $a\in
    \mathbb{R}^d$ with indices restricted to $\Lambda$
  \item $a_{\text{min}}$: the nonzero element of $a$ with the least magnitude
\end{itemize}

\subsection{Preliminaries}
As mentioned above, there are two types of metrics that are frequently used in the CS
literature, the coherence indices and the RIC. The use of coherence indices can be traced back to~\cite{DH},
where Donoho and Huo used the mutual coherence to describe the
equivalence of the $\ell_0$ minimization and $\ell_1$ minimization.
\begin{defn}\label{def:mutualcoherence}
  The \emph{mutual coherence} $M(\Phi)$ of a matrix $\Phi$ is defined by
  \begin{align*}
    M(\Phi) := \max_{\substack{\phi_i, \phi_j \in \Phi \\ i \neq
        j}} | \langle \phi_i, \phi_j  \rangle |,
  \end{align*}
  where $\langle \cdot, \cdot \rangle$ represents the usual inner product.
\end{defn}
\noindent They showed that~\cite{DH} if $k < (1+M^{-1})/2$, then the $\ell_1$
minimization problem  has a unique solution and is equivalent to
the $\ell_0$ minimization problem. Further results using the mutual
coherence for $\ell_1$ minimization can be found
in~\cite{MR1929464,MR1963681,MR2045813}. Interestingly, OMP shares the
same bound as the $\ell_1$ minimization problem.  It has been shown in~\cite{DET2,DET}
that if $k < (1+M^{-1})/2$, then OMP can recover the true support of the
ideal signal in $k$ iterations in presence of noise, and get exact
recovery of the signal in noiseless case. Moreover, this bound is
known to be sharp.

The other metric, the RIC, was introduced by Candes
and Tao in \cite{Candes2005:LP}.
\begin{defn}[Restricted Isometry Property]
  Let $\Sigma_k$ be the set of $k$-sparse vectors $\Sigma_k = \{u\in\mathbb{R}^d: \|u\|_0 \le k\}$. A matrix $\Phi$ satisfies the \emph{restricted isometry property}
  of order $k$ with the \emph{restricted isometry constant (RIC)} $\delta_k$
  if $\delta_k$ is the smallest constant such that
  \begin{align*}
    (1-\delta_k)\|v\|_2^2 \le \|\Phi v\|_2^2 \le (1+\delta_k)\|v\|_2^2
  \end{align*}
  holds for all $v \in \Sigma_k$.
\end{defn}


It is easy to see that the RIC $\delta_k$ increases with $k$ since
$\Sigma_k \subset \Sigma_{k+1}$. Candes shows in \cite{Candes:08} if
$\delta_{2k} < \sqrt{2} - 1$, then $\ell_1$ minimization is equivalent to
$\ell_0$ minimization. Better bounds have been developed
\cite{Foucart2009395, Foucart201097, 5290058, 5550400, Mo2011460} and the most
recent result is $\delta_{2k} < 0.4931$
\cite{Mo2011460}. In contrast to
$\ell_1$ minimization, the conventional metric for a sensing matrix in using greedy
algorithms is usually chosen to be the coherence indices (see for instance
\cite{1337101,DET,5361489}). Recently, researchers have started to investigate the
performance of OMP using the RIC. Davenport and
Wakin~\cite{5550495} have proved that $\delta_{k+1} <
\frac{1}{3\sqrt{k}}$ is sufficient for OMP to recover any $k$-sparse
signal in $k$ iterations. Further improvements have
been reported in \cite{HuangZhu:11, 6092487, 6213142, MoShen:12}. In
particular, \my{Wang and Shim~\cite{6213142}\footnote{\my{In the construction
  of the bound in this paper, the condition was strengthened from
  $\sqrt{k}\delta_{k+1} + \delta_k < 1$ for the first iteration of OMP
  to
  $\delta_{k+1} < \frac{1}{\sqrt{k}+1}$ for the subsequent iterations
  and the main result.}} and} Mo
and Shen~\cite{MoShen:12} have improved the
bound to $\delta_{k+1} < \frac{1}{\sqrt{k}+1}$, and have also given an
example that OMP fails after $k$ iterations when $\delta_{k+1} = \frac{1}{\sqrt{k}}$, as
was conjectured by Dai and Milenkovic in
\cite{4839056}. Zhang\cite{Zhang:2011} has also given a bound
$\delta_{31k} < 1/3$ for OMP to recover a $k$-sparse signal in more
than $k$ iterations. Loosely
speaking, as discussed in~\cite{DuarteEldar:2011}, Zhang's result requires fewer measurements when $k$ is large. However, when $k$ is small, his result is
worse. In addition, Zhang's algorithm requires more than $k$
iterations, and cannot recover the true support of the ideal
sparse signal
.

\my{Whilst coherence measures and the RIC have been used in many studies,
the two metrics have generally been considered
independently. In~\cite{YangdeHoog:2014}, the global 2-coherence was
introduced as a means of providing a bridge between them. }

\begin{defn}\label{def:newcoherence}
  Denote $[d]$ the index set $\{1,2,\ldots,d\}$. The \emph{global 2-coherence} of a dictionary $\Phi \in \mathbb{R}^{n\times d}$ is defined as
  \begin{equation*}\label{eqn:newcoherence}
    \nu_k(\Phi):=
    \max_{i\in[d]}\max_{\substack{\Lambda\subseteq[d]\setminus\{i\} \\ |\Lambda| \le k}}
    \left( \sum_{j\in\Lambda} \langle \phi_i,\phi_j \rangle^2
    \right)^{1/2},
  \end{equation*}
  where $\phi_i$, $\phi_j$ are atoms from the dictionary $\Phi$.
\end{defn}
\noindent Note that the global 2-coherence $\nu_k (\Phi)$
defined in Definition~\ref{def:newcoherence} is more
general than the mutual coherence defined in
Definition~\ref{def:mutualcoherence}. In fact, when $k=1$, the global
2-coherence defined in Definition~\ref{def:newcoherence} is
exactly the mutual coherence. It is also
more general than the ``local'' 2-coherence function defined
in~\cite{5361489}. 
\my{
  The intuition behind this definition can be seen from the following. For greedy methods which add elements by examining inner products
  of residuals, we require sharp bounds for $|\langle \Phi a, \phi_i
  \rangle|$ where $\|a\|_0 = k < n$ and $\phi_i$ is the $i$th column of
  $\Phi$. Specifically, we need an upper bound for $|\langle \Phi a, \phi_i
  \rangle|$ when $i \notin \Lambda$ and a lower bound for $|\langle \Phi a, \phi_i
  \rangle|$ when $i \in \Lambda$, where $\Lambda = \text{supp}(a)$. Such bounds have been derived
  in~\cite{YangdeHoog:2014}, namely
  \begin{align}
    &\max_{\phi_i, i \notin \Lambda} |\langle \Phi a, \phi_i
    \rangle| \le \nu_k \|a\|_2,  \notag \\
    &\max_{\phi_i, i \in \Lambda} |\langle \Phi a, \phi_i
    \rangle| \ge \frac{(1-\delta_k)\|a\|_2}{\sqrt{k}}. \label{eqn:inner_product_upper_bound}
  \end{align}
  It is straightforward to show that the first bound is sharp in the
  sense that for every sampling matrix $\Phi$ there is a $k$-sparse
  vector $a$ such that the equality holds. Thus, the global 2-coherence is
  a natural metric to use for the analysis of greedy algorithms. Note
  however that the second inequality is not sharp for all choices of
  sampling matrix $\Phi$. We could have used the metric
  \begin{align*}
    \omega_k(\Phi) = \min_{\|x\|_0 = k} \max_{\substack{\phi_i, 
        i\in \Lambda}} \frac{|\langle \Phi x, \phi_i
    \rangle|}{\|x\|_2},
\end{align*}
\mymy{which would replace Equation~\eqref{eqn:inner_product_upper_bound}
with the bound
\begin{align*}
  \max_{\phi_i, i \in \Lambda} |\langle \Phi a, \phi_i
    \rangle| \ge \omega_k(\Phi) \|a\|_2,
  \end{align*}
 which is sharp in the sense that for every sampling matrix $\Phi$
 there is a $k$-sparse vector $a$ such that the equality
 holds. However, we have}
opted to work with the
RIC as it is a more familiar measure and leads to estimates that are
nearly as good. Thus both the RIC and the global 2-coherence are
natural metrics to use in the analysis of greedy algorithms. Note that
the metric $\omega_k(\Phi)$ can be written using operator norm as (see
appendix for details)
\begin{align}\label{eqn:omega}
  \omega_k(\Phi) = \min_{\substack{\Lambda \subset [d] \\ |\Lambda| = k}}
  \frac{1}{\|(\Phi_\Lambda^T \Phi_\Lambda)^{-1}\|_{\infty, 2}},
\end{align}
where $\Phi_\Lambda$ denotes the sub-dictionary of $\Phi$ with indices of atoms
restricted to the index set $\Lambda$.
}\mymy{The mixed norm $\|\cdot\|_{\alpha,\beta}$ here is defined as
  \begin{align*}
    \|A\|_{\alpha,\beta} = \max_{x\neq 0} \frac{\|A\|_\beta}{\|x\|_\alpha}.
\end{align*}
}

\my{The global 2-coherence can also be written as
  \begin{align}
    \nu_k(\Phi) = \max_{\substack{\Lambda\subseteq[d] \\ |\Lambda| \le
        k+1}} \| \Phi_\Lambda^T \Phi_\Lambda - I \|_{\infty, 2},
    \label{eqn:2coherence}
  \end{align}
It is no more complicated than
the RIC, which can be expressed as
\begin{align}
  \delta_k = \max_{\substack{\Lambda\subseteq[d] \\ |\Lambda| \le k}}
  \| \Phi_\Lambda^T \Phi_\Lambda - I \|_{2, 2}.
  \label{eqn:ric}
\end{align}
In fact, from an algorithmic point of view, the global 2-coherence is more useful in
practice as it can be calculated in polynomial time.

Equations~\eqref{eqn:2coherence}~and~\eqref{eqn:ric} are used in the
proof of the following proposition (see~\cite{YangdeHoog:2014} for details).}

\begin{pro}\label{pro:relationship}
    For $k \ge 1$, 
    \begin{equation}\label{eqn:relationship}
        M \le \nu_{k} \le  \delta_{k+1} \le \sqrt{k}\nu_{k} \le kM.
    \end{equation}
  \end{pro}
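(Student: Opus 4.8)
The plan is to prove the chain \eqref{eqn:relationship} one link at a time, reading the two middle bounds off the operator‑norm descriptions \eqref{eqn:2coherence} and \eqref{eqn:ric} and obtaining the two outer bounds directly by Cauchy--Schwarz. Fix throughout an index set $\Lambda$ with $|\Lambda|\le k+1$ and let $G=\Phi_\Lambda^T\Phi_\Lambda$; since the atoms are normalized, $G$ has unit diagonal, so $G-I$ has zero diagonal and its off‑diagonal entry in position $(i,j)$ is $\langle\phi_i,\phi_j\rangle$.

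The two outer links are immediate. For $M\le\nu_k$, the hypothesis $k\ge 1$ lets us take $\Lambda=\{j\}$ a singleton in the definition of $\nu_k$, whence $\nu_k\ge\max_{i\ne j}|\langle\phi_i,\phi_j\rangle|=M$. For $\sqrt{k}\,\nu_k\le kM$, any admissible set in the definition of $\nu_k$ contains at most $k$ indices, so $\sum_{j\in\Lambda}\langle\phi_i,\phi_j\rangle^2\le kM^2$, giving $\nu_k\le\sqrt{k}\,M$ and hence the claim.

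The two middle links are the crux, and both amount to comparing the largest $\ell_2$‑norm of a row of $G-I$ with the spectral norm $\|G-I\|_{2,2}$. By \eqref{eqn:2coherence}, the former quantity, maximized over all $\Lambda$ with $|\Lambda|\le k+1$, is exactly $\nu_k$ (the $\ell_2$‑norm of row $i$ of $G-I$ is $\big(\sum_{j\in\Lambda\setminus\{i\}}\langle\phi_i,\phi_j\rangle^2\big)^{1/2}$), while by \eqref{eqn:ric} the latter, over the same family, is $\delta_{k+1}$. For $\nu_k\le\delta_{k+1}$ it then suffices to note that, $G-I$ being symmetric, the $\ell_2$‑norm of its $i$th row equals $\|(G-I)e_i\|_2\le\|G-I\|_{2,2}$; taking the maximum over $i$ and over $\Lambda$ gives the bound. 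For $\delta_{k+1}\le\sqrt{k}\,\nu_k$, apply Gershgorin's theorem to the zero‑diagonal symmetric matrix $G-I$: every eigenvalue is bounded in modulus by some absolute row sum $\sum_{j\ne i}|\langle\phi_i,\phi_j\rangle|$, which, as that row has at most $k$ nonzero entries, is at most $\sqrt{k}\,\big(\sum_{j\ne i}\langle\phi_i,\phi_j\rangle^2\big)^{1/2}\le\sqrt{k}\,\nu_k$ by Cauchy--Schwarz; maximizing over $\Lambda$ completes the chain.

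I expect no genuine obstacle beyond bookkeeping. The one point needing care is the cardinality arithmetic — a submatrix with $k+1$ rows has at most $k$ off‑diagonal entries in each row, which is precisely what produces the factor $\sqrt{k}$ in \eqref{eqn:relationship} rather than $\sqrt{k+1}$ — together with verifying that the mixed‑norm representation \eqref{eqn:2coherence} of $\nu_k$ really does record the maximal row $\ell_2$‑norm of $G-I$. Nothing deeper than Cauchy--Schwarz and Gershgorin's theorem enters.
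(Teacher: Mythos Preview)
Your proposal is correct and follows precisely the route the paper indicates: the paper does not spell out a proof here but refers to \cite{YangdeHoog:2014} and states that the operator-norm representations \eqref{eqn:2coherence} and \eqref{eqn:ric} are the key ingredients, which is exactly what you use. Your four links---singleton choice for $M\le\nu_k$, termwise bound for $\nu_k\le\sqrt{k}M$, the row-$\ell_2$-norm versus spectral-norm comparison $\|(G-I)e_i\|_2\le\|G-I\|_{2,2}$ for $\nu_k\le\delta_{k+1}$, and Gershgorin plus Cauchy--Schwarz (with the observation that each row of $G-I$ has at most $k$ nonzero entries when $|\Lambda|\le k+1$) for $\delta_{k+1}\le\sqrt{k}\,\nu_k$---are all sound and constitute the natural detailed argument behind the paper's sketch.
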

  \noindent This proposition provides the upper and lower bounds for both the mutual
  coherence $M$ and the RIC $\delta_k$ in terms of the global
  2-coherence $\nu_k$, which establishes the bridge between them. It
  connects the two once independent metrics for greedy algorithms
  together. Moreover, by applying inequalities~\eqref{eqn:relationship}, the authors
  improved the bound on the RIC for OMP to $\delta_k + \sqrt{k}\delta_{k+1}<1$.

Next, in addition to
Proposition~\ref{pro:relationship}, we establish in this paper the relationship among the global 2-coherence $\nu_k$, the cumulative coherence $\mu_{1,k}$ defined in~\cite{1337101},
and the RIC $\delta_k$.

\begin{pro}\label{pro:relationship_new}
    For $k \ge 1$, we have
    \begin{equation*}
        \delta_{k+1} \le \mu_{1,k} \le \sqrt{k}\nu_{k}.
    \end{equation*}
\end{pro}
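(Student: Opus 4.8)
The plan is to establish the two inequalities separately. Throughout I use the cumulative coherence of~\cite{1337101},
\begin{equation*}
  \mu_{1,k} = \max_{i\in[d]}\ \max_{\substack{\Lambda\subseteq[d]\setminus\{i\}\\ |\Lambda|\le k}}\ \sum_{j\in\Lambda}\, |\langle \phi_i,\phi_j\rangle|,
\end{equation*}
together with the operator-norm expressions~\eqref{eqn:2coherence} for $\nu_k$ and~\eqref{eqn:ric} for $\delta_{k+1}$.

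For the upper bound $\mu_{1,k}\le\sqrt{k}\,\nu_k$, I would take the index $i$ and set $\Lambda$ that achieve the maximum defining $\mu_{1,k}$ and apply the Cauchy--Schwarz inequality to the finite sequence $(\langle\phi_i,\phi_j\rangle)_{j\in\Lambda}$, obtaining $\sum_{j\in\Lambda}|\langle\phi_i,\phi_j\rangle|\le|\Lambda|^{1/2}\bigl(\sum_{j\in\Lambda}\langle\phi_i,\phi_j\rangle^2\bigr)^{1/2}$. Since $|\Lambda|\le k$, Definition~\ref{def:newcoherence} bounds the right-hand side by $\sqrt{k}\,\nu_k$. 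This half is routine.

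For the lower bound $\delta_{k+1}\le\mu_{1,k}$, I would fix any $\Lambda\subseteq[d]$ with $|\Lambda|\le k+1$ and consider $G:=\Phi_\Lambda^T\Phi_\Lambda-I$, which by the normalization $\|\phi_i\|_2=1$ is real symmetric with vanishing diagonal and off-diagonal entries $\langle\phi_i,\phi_j\rangle$. For a real symmetric matrix $\|G\|_{2,2}$ equals the spectral radius, so either Gershgorin's circle theorem (all eigenvalues lie in discs centred at $0$ of radius $\sum_{j\in\Lambda\setminus\{i\}}|\langle\phi_i,\phi_j\rangle|$) or the elementary estimate $\|G\|_{2,2}\le\|G\|_{\infty,\infty}=\max_{i\in\Lambda}\sum_{j\in\Lambda\setminus\{i\}}|\langle\phi_i,\phi_j\rangle|$ applies. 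Since each such sum ranges over a set $\Lambda\setminus\{i\}$ of cardinality at most $k$, it is at most $\mu_{1,k}$; taking the maximum over all admissible $\Lambda$ and using~\eqref{eqn:ric} yields $\delta_{k+1}\le\mu_{1,k}$.

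I do not expect a genuine obstacle here; the one point to get right is the off-by-one bookkeeping, namely that the sub-dictionaries of size $k+1$ appearing in $\delta_{k+1}$ produce Gershgorin radii (equivalently, absolute row sums of $G$) over exactly $|\Lambda|-1\le k$ off-diagonal terms, which is precisely the range permitted in the definition of $\mu_{1,k}$.
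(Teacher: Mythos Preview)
Your proposal is correct and matches the paper's approach: the bound $\mu_{1,k}\le\sqrt{k}\,\nu_k$ via Cauchy--Schwarz is exactly what the paper does, and for $\delta_{k+1}\le\mu_{1,k}$ the paper simply cites Proposition~2.10 of~\cite{Rauhut:10}, whose proof is precisely the Gershgorin/row-sum argument you outline. In that sense your write-up is more self-contained than the paper's, but the underlying mathematics is the same.
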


\begin{proof}
  The inequality $\delta_{k+1} \le \mu_{1,k}$ has been established in
  Proposition 2.10 in~\cite{Rauhut:10}. We next show $\mu_{1,k} \le
  \sqrt{k}\nu_{k}$ for all positive integer $k$.
  \begin{align*}
    \mu_{1,k}
    &=
     \max_{i\in[d]}\max_{\substack{\Lambda\subseteq[d]\setminus\{i\}
         \\ |\Lambda| \le k}} \sum_{j\in\Lambda} |\langle \phi_i,
     \phi_j \rangle| \\
     &\le
      \max_{i\in[d]}\max_{\substack{\Lambda\subseteq[d]\setminus\{i\}
          \\ |\Lambda| \le m}} \left( \sum_{j\in\Lambda} \langle \phi_i,\phi_j \rangle^2
      \right)^{1/2} \cdot \sqrt{k} \\
      &=
      \sqrt{k}\nu_k.
    \end{align*}
  \end{proof}
\noindent  Notice that from the above relations, we see clearly that the cumulative
coherence can only bound the restricted isometry constant from above,
which provides another motivation for introducing the global 2-coherence.

\section{Main Results}\label{sec:main}
In this section, we start the analysis of the recovery properties of
OMPT with the noiseless case, and compare them with the
state-of-art results. We then generalize the results to the case where
a measurement signal is contaminated by a perturbation. A convergence
analysis of OMPT is also given, which can be extended to the more general
Hilbert space. To make the paper more readable, the detailed
proofs of the main results are relegated to the Appendix.

Let us first introduce the OMPT algorithm. This is a thresholding type
modification of OMP and weak OMP (WOMP). It replaces
the expensive greedy step in OMP and WOMP with a thresholding
step. Details are presented in Algorithm~\ref{alg:ompt}, where
$\Phi_{\Lambda_s}$ denotes the sub-dictionary of $\Phi$ with atoms
restricted to the index set $\Lambda_s$ from the $s$-th iteration, and $\hat{a}_{\Lambda_k}$
denotes $\hat{a}$ restricted to the support set $\Lambda_k$ after $k$
iterations. \myIII{Note that in the thresholding step (step 4 in
  Algorithm~\ref{alg:ompt}), unlike some multi-index thresholding
  algorithms such as StOMP~\cite{6145475}, only the first index satisfying the
  thresholding condition is picked from a randomly permuted index set.} An
initial study of this algorithm in Hilbert space was presented
in~\cite{Yang:2011:GAA:2338237}. For the case of Banach space, a
similar version of the algorithm was studied in~\cite{Temlyakov:08}
(see also \cite{T9} and \cite{Temlyakov:11}). However, as far as we
are aware, the present paper provides  the first analysis of the
performance of this algorithm for sparse signal recovery in
CS using different coherence indices and the restricted isometry
constant.

\begin{algorithm}
  \caption{Orthogonal Matching Pursuit with Thresholding (OMPT)}
  \begin{algorithmic}[1]
    \STATE\textbf{Input:} threshold $t$, dictionary $\Phi$,
    signal $f$.
    \STATE\textbf{Initialization:} $r_0 := f$, $x_0:=0$, 
    $\Lambda_0:=\emptyset$, $s:=0$.
    \WHILE {$\|r_s\|_2 > t \|f\|_2$}
      \STATE Find an index $i$ such that $$|\langle r_s,\phi_i\rangle|
      \ge t\|r_s\|_2;$$
      \STATE Update the support: $$\Lambda_{s+1} = \Lambda_s \cup \{i\};$$
      \STATE Update the estimate: $$x_{s+1} = \argmin_z \|f - \Phi_{\Lambda_{s+1}} z\|_2;$$
      \STATE Update the residual: $$r_{s+1} = f - \Phi_{\Lambda_{s+1}} x_{s+1};$$
      \STATE $s = s+1$;
    \ENDWHILE
    \STATE \textbf{Output:} If the algorithm is stopped after $k$
    iterations, then the output estimate $\hat{a}$ of $a$ is
    $\hat{a}_{\Lambda_k} = x_k$ and $\hat{a}_{\Lambda_k^C} = 0$.
  \end{algorithmic}
  \label{alg:ompt}
\end{algorithm}

First we study the recovery properties of the OMPT algorithm. We start
with the ideal noiseless case where the measurement signal is
obtained by encoding a sparse signal. Specifically, let
$\Lambda\subset [d]$ with $|\Lambda| = k$. We consider a measurement
$f = \Phi a$, where $\Phi \in
\mathbb{R}^{n\times d}$ and $a\in\mathbb{R}^d$ with $\supp(a) = \Lambda$.
We have the following results.

\begin{thm}\label{thm:ompt_noiseless}
  Let $f = \Phi a$ with $\|a\|_0 = k$. If
  \begin{align}
    \delta_k + \sqrt{k}\nu_k < 1 \label{eqn:thm_ompt_bound}
  \end{align}
  and
  \begin{align}
    \frac{\nu_k}{\sqrt{1-\delta_k}} < t \le
    \frac{\sqrt{1-\delta_k}}{\sqrt{k}}, \label{eqn:ompt_thm_condition}
  \end{align}
  then $a$ is the unique sparsest representation of $f$ and moreover,
  OMPT recovers $a$ exactly in $k$ iterations.
\end{thm}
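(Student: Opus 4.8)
The plan is to establish the recovery claim by induction on the OMPT iteration count, and then to read off the uniqueness statement as a cheap consequence. Write $\Lambda=\supp(a)$, so $|\Lambda|=k$. First I would note two free facts: condition \eqref{eqn:thm_ompt_bound} is precisely what makes the interval in \eqref{eqn:ompt_thm_condition} nonempty (so a legal threshold exists), and it forces $\delta_k<1$, so $\Phi_\Lambda$ has full column rank. The inductive hypothesis entering iteration $s$ will be $\Lambda_s\subseteq\Lambda$ with $|\Lambda_s|=s$. The key structural observation is that because $x_s=\Phi_{\Lambda_s}^\dagger f$, the residual $r_s=f-\Phi_{\Lambda_s}x_s$ is the orthogonal projection of $f$ off the column space of $\Phi_{\Lambda_s}$; hence $\langle r_s,\phi_i\rangle=0$ for every $i\in\Lambda_s$, and $r_s=\Phi c^{(s)}$ for a vector $c^{(s)}$ supported on $\Lambda$ that agrees with $a$ on $\Lambda\setminus\Lambda_s$. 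In particular, while $\Lambda_s\subsetneq\Lambda$ one has $c^{(s)}_{\Lambda\setminus\Lambda_s}=a_{\Lambda\setminus\Lambda_s}\neq 0$, so injectivity of $\Phi_\Lambda$ gives $r_s\neq 0$; the residual therefore cannot vanish until the full support has been picked.

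For the inductive step I would control the quantities $\langle r_s,\phi_i\rangle=\langle\Phi c^{(s)},\phi_i\rangle$ \emph{measured against} $\|r_s\|_2$, since the selection step compares them with $t\|r_s\|_2$. For any $i\notin\Lambda$, the sharp global $2$-coherence bound gives $|\langle\Phi c^{(s)},\phi_i\rangle|\le\nu_k\|c^{(s)}\|_2$, and combining this with $\|r_s\|_2=\|\Phi c^{(s)}\|_2\ge\sqrt{1-\delta_k}\,\|c^{(s)}\|_2$ yields $|\langle r_s,\phi_i\rangle|\le\frac{\nu_k}{\sqrt{1-\delta_k}}\|r_s\|_2<t\|r_s\|_2$ by the left-hand inequality of \eqref{eqn:ompt_thm_condition}; thus no incorrect atom can ever meet the threshold. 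For the correct atoms I would use $\|r_s\|_2^2=\langle r_s,f\rangle=\sum_{i\in\Lambda\setminus\Lambda_s}a_i\langle r_s,\phi_i\rangle$ (the $i\in\Lambda_s$ terms drop out), and then Cauchy--Schwarz together with $\|a_{\Lambda\setminus\Lambda_s}\|_2\le\|c^{(s)}\|_2\le\|r_s\|_2/\sqrt{1-\delta_k}$ to get $\max_{i\in\Lambda\setminus\Lambda_s}|\langle r_s,\phi_i\rangle|\ge\frac{\sqrt{1-\delta_k}}{\sqrt k}\|r_s\|_2\ge t\|r_s\|_2$ by the right-hand inequality of \eqref{eqn:ompt_thm_condition}; thus some not-yet-selected atom of $\Lambda$ does meet the threshold. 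Since the inner products over $\Lambda_s$ are $0$ while $\|r_s\|_2>0$, whatever index the algorithm picks lies in $\Lambda\setminus\Lambda_s$, which restores the inductive hypothesis at $s+1$. A final check that the stopping test $\|r_s\|_2\le t\|f\|_2$ is not triggered while $\Lambda_s\subsetneq\Lambda$ then shows the loop runs exactly $k$ times, after which $\Lambda_k=\Lambda$, $x_k=\Phi_\Lambda^\dagger\Phi_\Lambda a_\Lambda=a_\Lambda$, $r_k=0$, the loop halts, and $\hat a=a$.

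The uniqueness statement should follow with no extra work: nothing in the argument above used anything about $a$ beyond $\|a\|_0=k$, and the hypotheses \eqref{eqn:thm_ompt_bound}--\eqref{eqn:ompt_thm_condition} remain valid (with the same $t$) if $k$ is replaced by any smaller sparsity level, because $\delta_j$ and $\nu_j$ are nondecreasing in $j$. Hence if $f=\Phi b$ with $\|b\|_0\le k$ as well, OMPT run on the same $f$ must also return $b$, forcing $b=a$; so $a$ is the unique sparsest representation of $f$.

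I expect the delicate part to be the matching of constants rather than any single inequality. The two inner-product estimates come out naturally in terms of $\|c^{(s)}\|_2$, whereas the thresholding test lives at the scale $\|r_s\|_2$, and it is exactly the conversion factor $\sqrt{1-\delta_k}$ relating the two that must simultaneously (i) push the ``false-alarm'' level $\nu_k\|c^{(s)}\|_2$ strictly below $t\|r_s\|_2$ (forcing $t>\nu_k/\sqrt{1-\delta_k}$) and (ii) keep the guaranteed ``detection'' level $\frac{\sqrt{1-\delta_k}}{\sqrt k}\|r_s\|_2$ at or above $t\|r_s\|_2$ (forcing $t\le\sqrt{1-\delta_k}/\sqrt k$); both can hold at once precisely when $\delta_k+\sqrt k\,\nu_k<1$, which is why that is the governing hypothesis. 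A second point to be careful about is verifying that the residual norm stays above the stopping level $t\|f\|_2$ for the first $k$ iterations, so that the while-loop halts for the intended reason.
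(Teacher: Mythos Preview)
Your proposal is correct and follows essentially the same approach as the paper: the paper derives the noiseless case as the $\epsilon=0$ specialization of its noisy Theorem~\ref{thm:ompt_noisy}, whose Lemmas~\ref{lem:t_lowerbound_noisy}--\ref{lem:t_upperbound_noisy} (via Lemma~B.1) give exactly your two inner-product estimates $\max_{i\notin\Lambda}|\langle r_s,\phi_i\rangle|\le\nu_k\|c^{(s)}\|_2$ and $\max_{i\in\Lambda}|\langle r_s,\phi_i\rangle|\ge\frac{\sqrt{1-\delta_k}}{\sqrt{k-s}}\|r_s\|_2$, converted to the $\|r_s\|_2$ scale by the same RIP factor $\sqrt{1-\delta_k}$. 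Your explicit uniqueness argument (running OMPT on any competing $k$-sparse representation) and your flag about the while-loop stopping test are both points the paper leaves implicit, so in those respects you are slightly more careful than the original.
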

\my{\noindent This result for noiseless recovery property of OMPT involves the
  global 2-coherence and the RIC. This is the most accurate bound in
  the paper and is the basis for the derivation of simpler bounds
  given in Corollary~\ref{cor:ompt}. As the proof of this theorem is a special case of
Theorem~\ref{thm:ompt_noisy}, we refer the reader to the proof of Theorem~\ref{thm:ompt_noisy}, which is given in the appendix.}


Notice that the condition~\eqref{eqn:ompt_thm_condition} on the
threshold $t$ requires the
bound~\eqref{eqn:thm_ompt_bound} as a sufficient
condition. Next we will examine this bound and compare it with the
corresponding bound for OMP by exploring the relationships among
different coherence indices and the restricted isometry
constant.

Now by applying Proposition~\ref{pro:relationship} and Proposition~\ref{pro:relationship_new} to
Theorem~\ref{thm:ompt_noiseless}, we obtain the following corollary.

\begin{cor}\label{cor:ompt}
  Let $f = \Phi a$ with $\|a\|_0 = k$ ($k \ge 2$). If any of the following four
  conditions is satisfied:
  \begin{enumerate}[i)]
    \item
      \begin{align}
        \delta_k + \sqrt{k} \delta_{k+1} <
        1 \label{eqn:ompt_cor_condition_delta} 
      \end{align}
      and
      \begin{align*}
        \frac{\delta_{k+1}}{\sqrt{1-\delta_k}} < t \le
        \frac{\sqrt{1-\delta_k}}{\sqrt{k}},
      \end{align*}
    \item
      \begin{align*}
        \nu_k\sqrt{k} + \nu_{k-1}\sqrt{k-1} < 1 
      \end{align*}
      and
      \begin{align*}
        \frac{\nu_k}{\sqrt{1-\nu_{k-1}\sqrt{k-1}}} < t \le
        \frac{\sqrt{1-\nu_{k-1}\sqrt{k-1}}}{\sqrt{k}}, \notag
      \end{align*}
    \item
      \begin{align*}
        \mu_{1,k-1} + \sqrt{k}\mu_{1,k} < 1 
      \end{align*}
      and
      \begin{align*}
        \frac{\mu_{1,k}}{\sqrt{1-\mu_{1,k-1}}} < t \le
        \frac{\sqrt{1-\mu_{1,k-1}}}{\sqrt{k}}, \notag
      \end{align*}
    \item
      \begin{align}
        M < \frac{1}{2k-1} \label{eqn:ompt_cor_condition_M}
      \end{align}
      and
      \begin{align}
        \frac{\sqrt{k}M}{\sqrt{1-(k-1)M}} < t \le
        \frac{\sqrt{1-(k-1)M}}{\sqrt{k}}, \label{eqn:ompt_cor_condition_t}
      \end{align}
    \end{enumerate}
  then $a$ is the unique sparsest representation of $f$ and moreover,
  OMPT recovers $a$ exactly in $k$ iterations.
\end{cor}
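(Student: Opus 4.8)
The plan is to derive all four parts directly from Theorem~\ref{thm:ompt_noiseless}: for each case I translate the stated hypotheses into the pair of conditions \eqref{eqn:thm_ompt_bound} and \eqref{eqn:ompt_thm_condition} by invoking the inequalities collected in Proposition~\ref{pro:relationship} and Proposition~\ref{pro:relationship_new}. Thus there is essentially one argument applied four times with different substitutions, and the uniqueness of the sparsest representation of $f$ is inherited for free from the theorem. The only thing to monitor is that the admissible threshold interval stated in the corollary always sits inside the interval $\bigl(\nu_k/\sqrt{1-\delta_k},\ \sqrt{1-\delta_k}/\sqrt{k}\bigr]$ demanded by the theorem.

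For part (i) I would use $\nu_k \le \delta_{k+1}$ from \eqref{eqn:relationship}. This gives $\delta_k + \sqrt{k}\,\nu_k \le \delta_k + \sqrt{k}\,\delta_{k+1} < 1$, so \eqref{eqn:thm_ompt_bound} holds; and since $\nu_k \le \delta_{k+1}$ implies $\nu_k/\sqrt{1-\delta_k} \le \delta_{k+1}/\sqrt{1-\delta_k}$, every $t$ in the interval of (i) also satisfies \eqref{eqn:ompt_thm_condition}. Theorem~\ref{thm:ompt_noiseless} then applies verbatim.

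For parts (ii)--(iv) the extra ingredient is an upper bound on $\delta_k$ in terms of the ``order $k-1$'' quantities, which is where the hypothesis $k\ge 2$ enters: from \eqref{eqn:relationship} applied at order $k-1$ one gets $\delta_k \le \sqrt{k-1}\,\nu_{k-1} \le (k-1)M$, and from Proposition~\ref{pro:relationship_new} at order $k-1$ one gets $\delta_k \le \mu_{1,k-1}$. Combining these with $\nu_k \le \sqrt{k}\,M$ (again \eqref{eqn:relationship}) and $\nu_k \le \delta_{k+1} \le \mu_{1,k}$ (Proposition~\ref{pro:relationship_new}), each summation hypothesis dominates $\delta_k + \sqrt{k}\,\nu_k$; for instance in (iv), $\delta_k + \sqrt{k}\,\nu_k \le (k-1)M + \sqrt{k}\cdot\sqrt{k}\,M = (2k-1)M < 1$, and similarly $\delta_k + \sqrt{k}\,\nu_k \le \sqrt{k-1}\,\nu_{k-1} + \sqrt{k}\,\nu_k < 1$ in (ii) and $\le \mu_{1,k-1} + \sqrt{k}\,\mu_{1,k} < 1$ in (iii). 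For the threshold bound, observe that each of $\sqrt{k-1}\,\nu_{k-1}$, $\mu_{1,k-1}$, $(k-1)M$ is an upper bound for $\delta_k$, so replacing $\delta_k$ by the larger quantity inside $1-(\cdot)$ raises the left endpoint and lowers the right endpoint of $\bigl(\cdot/\sqrt{1-\delta_k},\ \sqrt{1-\delta_k}/\sqrt{k}\bigr]$, using the monotonicity of $x \mapsto c/\sqrt{1-x}$ on $[0,1)$; hence the interval in the corollary is contained in the one in \eqref{eqn:ompt_thm_condition}, and Theorem~\ref{thm:ompt_noiseless} again applies.

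There is no genuine obstacle: the mathematical content lies entirely in Theorem~\ref{thm:ompt_noiseless} together with the two propositions. The only point requiring care is the bookkeeping for the threshold intervals --- one must check that shrinking $1-\delta_k$ to $1-(\text{upper bound for }\delta_k)$ keeps the corollary's interval nonempty and nested inside the theorem's, rather than merely verifying the summation conditions. Once the monotonicity remark is in place, each of the four cases reduces to a two-line computation.
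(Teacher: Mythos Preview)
Your proposal is correct and follows exactly the approach the paper indicates: the paper does not spell out a proof but simply says the corollary is obtained ``by applying Proposition~\ref{pro:relationship} and Proposition~\ref{pro:relationship_new} to Theorem~\ref{thm:ompt_noiseless},'' which is precisely the reduction you carry out. Your bookkeeping on the nesting of the threshold intervals (via monotonicity of $x\mapsto c/\sqrt{1-x}$ and the upper bounds $\delta_k\le\sqrt{k-1}\,\nu_{k-1}\le(k-1)M$, $\delta_k\le\mu_{1,k-1}$, $\nu_k\le\delta_{k+1}\le\mu_{1,k}$, $\nu_k\le\sqrt{k}\,M$) fills in the details the paper omits.
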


As we can see from the above corollary, although we are replacing the most difficult
  (expensive) step of OMP,
  namely the greedy step, by a very simple thresholding step, thus
  making it substantially more efficient, there is no performance degrading. The
  bound~\eqref{eqn:ompt_cor_condition_delta} on the restricted
  isometry constant $\delta_k$ is exactly the same as
  the bound in\cite[Corollary 3.3]{YangdeHoog:2014}
  for OMP. The bound~\eqref{eqn:ompt_cor_condition_M} on
  mutual coherence $M$ also coincides with the
  best known bound~\cite{Temlyakov:11}. Notice that the
  bound~\eqref{eqn:ompt_cor_condition_delta} gives an improved bound
  on the restricted isometry constant compared to the bound obtained
  in\cite{6213142,MoShen:12} for exact recovery of a $k$-sparse signal in
  $k$ iterations, where the bound was $\delta_{k+1} < \frac{1}{\sqrt{k}+1}$.

Theorem~\ref{thm:ompt_noiseless} can be generalized to the
noisy case. Specifically, Let $\Lambda \subset
[d]$ with $|\Lambda| = k$. We consider a measurement $f = \Phi
a + w$, where $a\in\mathbb{R}^d$ with $\supp(a) = \Lambda$ and
$\|w\|_2 \le \epsilon$. We will inspect the recovery performance of OMPT
after $k$ iterations. Denote by $a_{\text{min}}$ the nonzero entry of the
sparse signal $a$ with the least
  magnitude. The following two lemmas will be needed.

\begin{lem}\label{lem:t_lowerbound_noisy}
  Consider the residual at the $s$-th iteration of OMPT $r_s = \Phi
  a_s + w_s$.
  If
  \begin{align}
    t >
    \frac{\nu_k |a_{\emph{min}}|+\epsilon}{\sqrt{1-\delta_k}
      |a_{\emph{min}}|-\epsilon}, \label{eqn:t_lowerbound_noisy}
  \end{align}
  then
  \begin{align*}
    \max_{i \in [d]\setminus\Lambda} |\langle r_s,\phi_i \rangle| < t \|r_s\|_2.
  \end{align*}
\end{lem}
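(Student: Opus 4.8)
The plan is to use the residual decomposition $r_s = \Phi a_s + w_s$ in the form that arises naturally in OMPT, and then to pit the lower bound on $\|r_s\|_2$ coming from the restricted isometry property against the upper bound on $|\langle r_s,\phi_i\rangle|$ coming from the global 2-coherence. In the relevant (inductive) situation the indices selected so far all lie in the true support, $\Lambda_s\subseteq\Lambda$, so the least-squares estimate $x_s$ extends to a vector $\hat a_s\in\mathbb{R}^d$ with $\supp(\hat a_s)\subseteq\Lambda_s$ and $r_s = f-\Phi\hat a_s = \Phi(a-\hat a_s)+w$. Thus one may take $a_s := a-\hat a_s$ and $w_s := w$, which gives the only three properties I will need: (i) $\supp(a_s)\subseteq\Lambda$, so $\|a_s\|_0\le k$; (ii) $\|w_s\|_2=\|w\|_2\le\epsilon$; and (iii) at the $s$-th iteration with $s<k$ the set $\Lambda\setminus\Lambda_s$ is nonempty, every $j$ in it satisfies $(a_s)_j=a_j$, and hence $\|a_s\|_2\ge\|a_s\|_\infty\ge|a_{\min}|$.

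Next I would record the two estimates. For the denominator, the triangle inequality and the restricted isometry property applied to the $k$-sparse vector $a_s$ give
\[
  \|r_s\|_2 \;\ge\; \|\Phi a_s\|_2 - \|w_s\|_2 \;\ge\; \sqrt{1-\delta_k}\,\|a_s\|_2 - \epsilon \;\ge\; \sqrt{1-\delta_k}\,|a_{\min}| - \epsilon,
\]
the last quantity being strictly positive because $\sqrt{1-\delta_k}\,|a_{\min}|>\epsilon$ is implicit in the hypothesis on $t$. For the numerator, fix $i\in[d]\setminus\Lambda$; since $\supp(a_s)\subseteq\Lambda$ and $i\notin\Lambda$, Cauchy--Schwarz and the definition of the global 2-coherence give $|\langle\Phi a_s,\phi_i\rangle|\le\big(\sum_{j\in\Lambda}\langle\phi_i,\phi_j\rangle^2\big)^{1/2}\|a_s\|_2\le\nu_k\|a_s\|_2$, while $|\langle w_s,\phi_i\rangle|\le\|w_s\|_2\le\epsilon$ since $\|\phi_i\|_2=1$; hence $|\langle r_s,\phi_i\rangle|\le\nu_k\|a_s\|_2+\epsilon$.

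It then remains to show $\nu_k\|a_s\|_2+\epsilon < t\|r_s\|_2$, and by the displayed lower bound it suffices to prove $\nu_k\|a_s\|_2+\epsilon < t\big(\sqrt{1-\delta_k}\,\|a_s\|_2-\epsilon\big)$, i.e.\ $(t\sqrt{1-\delta_k}-\nu_k)\|a_s\|_2 > (1+t)\epsilon$. Clearing the positive denominator in the hypothesis $t>\frac{\nu_k|a_{\min}|+\epsilon}{\sqrt{1-\delta_k}\,|a_{\min}|-\epsilon}$ and collecting terms shows that it is exactly equivalent to $(t\sqrt{1-\delta_k}-\nu_k)|a_{\min}| > (1+t)\epsilon$, which in particular forces the coefficient $t\sqrt{1-\delta_k}-\nu_k$ to be positive; combining this with $\|a_s\|_2\ge|a_{\min}|$ yields the required strict inequality, and taking the maximum over $i\in[d]\setminus\Lambda$ completes the argument.

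The arithmetic here is entirely routine; the two points that actually need care are the bookkeeping behind the residual decomposition — in particular the justification of $\|a_s\|_2\ge|a_{\min}|$, which relies on $s<k$ so that at least one true atom is still missing from $\Lambda_s$ and therefore survives unchanged in $a_s$ — and the sign handling when manipulating the bound on $t$, where one must note that $\sqrt{1-\delta_k}\,|a_{\min}|-\epsilon>0$ is implicitly assumed (otherwise the stated lower bound on $t$ is meaningless). I expect the residual decomposition, i.e.\ tying this lemma to the inductive setting in which it will be invoked, to be the only genuinely delicate part.
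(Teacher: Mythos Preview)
Your proposal is correct and follows essentially the same route as the paper: bound $|\langle r_s,\phi_i\rangle|$ above by $\nu_k\|a_s\|_2+\epsilon$ via Cauchy--Schwarz and the definition of $\nu_k$, bound $\|r_s\|_2$ below by $\sqrt{1-\delta_k}\,\|a_s\|_2-\epsilon$ via RIP and the triangle inequality, and then compare using the hypothesis on $t$ together with a lower bound on $\|a_s\|_2$ in terms of $|a_{\min}|$. The only cosmetic difference is that the paper packages the two inner-product estimates into a preliminary lemma and invokes the slightly stronger bound $\|a_s\|_2\ge\sqrt{k-s}\,|a_{\min}|$ (all $k-s$ surviving coordinates contribute), whereas you use only $\|a_s\|_2\ge|a_{\min}|$; either suffices here.
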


\begin{lem}\label{lem:t_upperbound_noisy}
 Consider the residual at the $s$-th iteration of OMPT $r_s = \Phi
  a_s + w_s$. Given
  \begin{align}
    t \le \frac{(1-\delta_k) |a_{\emph{min}}| -
      \epsilon}{\sqrt{k(1-\delta_k)} |a_{\emph{min}}|+\epsilon}, \label{eqn:t_upperbound_noisy}
  \end{align}
  we have
  \begin{align*}
    \max_{i \in \Lambda} |\langle r_s,\phi_i \rangle| \ge t \|r_s\|_2.
  \end{align*}
\end{lem}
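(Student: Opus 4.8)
The plan is to run the usual inductive argument for greedy support recovery: suppose that through iteration $s$ every index picked by OMPT is correct, so $\Lambda_s\subseteq\Lambda$ with $|\Lambda_s|=s<k$, and unpack what this forces on the decomposition $r_s=\Phi a_s+w_s$. Since $r_s=f-\Phi_{\Lambda_s}x_s$ is a least-squares residual, it is the projection of $f=\Phi_\Lambda a_\Lambda+w$ onto the orthogonal complement of $\mathrm{range}(\Phi_{\Lambda_s})$; from this one reads off that $\supp(a_s)\subseteq\Lambda$ (so $\|a_s\|_0\le k$), that $a_s$ coincides with $a$ on the nonempty index set $\Lambda\setminus\Lambda_s$ (hence $\|a_s\|_2\ge|a_{\min}|$), that $\|w_s\|_2\le\|w\|_2\le\epsilon$, and that $\langle r_s,\phi_i\rangle=0$ for every $i\in\Lambda_s$. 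These are exactly the structural facts used in the proof of Lemma~\ref{lem:t_lowerbound_noisy}, and the present estimate is the dual computation: a lower bound on $\max_{i\in\Lambda}|\langle r_s,\phi_i\rangle|$ against an upper bound on $t\|r_s\|_2$.

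For the lower bound I would apply inequality~\eqref{eqn:inner_product_upper_bound} to $a_s$: this yields an index $i^*\in\supp(a_s)\subseteq\Lambda$ with $|\langle\Phi a_s,\phi_{i^*}\rangle|\ge(1-\delta_{\|a_s\|_0})\|a_s\|_2/\sqrt{\|a_s\|_0}\ge(1-\delta_k)\|a_s\|_2/\sqrt{k}$, the last step using that $j\mapsto(1-\delta_j)/\sqrt{j}$ is nonincreasing. Peeling off the noise by Cauchy--Schwarz, $|\langle w_s,\phi_{i^*}\rangle|\le\|w_s\|_2\le\epsilon$, so $\max_{i\in\Lambda}|\langle r_s,\phi_i\rangle|\ge(1-\delta_k)\|a_s\|_2/\sqrt{k}-\epsilon$.

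For the quantity $\|r_s\|_2$ I would not simply use $\|r_s\|_2\le\sqrt{1+\delta_k}\|a_s\|_2+\epsilon$ (which is too lossy), but instead exploit the residual identity $\|r_s\|_2^2=\langle r_s,f\rangle=\sum_{i\in\Lambda\setminus\Lambda_s}a_i\langle r_s,\phi_i\rangle+\langle r_s,w\rangle$, where the first sum runs only over $\Lambda\setminus\Lambda_s$ because $r_s\perp\mathrm{range}(\Phi_{\Lambda_s})$. Bounding the sum by Cauchy--Schwarz against $\max_{i\in\Lambda}|\langle r_s,\phi_i\rangle|$, bounding $|\langle r_s,w\rangle|\le\|r_s\|_2\epsilon$, and using $\|a_{\Lambda\setminus\Lambda_s}\|_2\le\|a_s\|_2\le\|\Phi a_s\|_2/\sqrt{1-\delta_k}$ (the restricted isometry lower bound) turns this into an inequality among $\|r_s\|_2$, $\max_{i\in\Lambda}|\langle r_s,\phi_i\rangle|$ and $\epsilon$ alone. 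Keeping also the companion lower bound $\|r_s\|_2\ge\sqrt{1-\delta_k}\|a_s\|_2-\epsilon\ge\sqrt{1-\delta_k}|a_{\min}|-\epsilon$, which is positive under the hypothesis, one can combine everything into a lower bound for the ratio $\max_{i\in\Lambda}|\langle r_s,\phi_i\rangle|/\|r_s\|_2$ that is monotone increasing in $\|a_s\|_2$; worst-casing $\|a_s\|_2$ down to its least admissible value $|a_{\min}|$ should collapse this lower bound to the right-hand side of~\eqref{eqn:t_upperbound_noisy}, proving the claim.

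The main obstacle is this final combining step: a naive chain of triangle inequalities produces a denominator of the form $\sqrt{k(1+\delta_k)}|a_{\min}|$ with an extra factor $\sqrt{k}\,\epsilon$, rather than the claimed $\sqrt{k(1-\delta_k)}|a_{\min}|+\epsilon$, so one has to be careful to route the $\|r_s\|_2$ estimate through the residual identity (not through $\|\Phi a_s\|_2\le\sqrt{1+\delta_k}\|a_s\|_2$), to track precisely how the noise inside $r_s$ and the noise appearing in $\langle r_s,w\rangle$ recombine, and to invoke the restricted isometry property at the exact order $\|a_s\|_0\le k$. A secondary subtlety is that $a_s$ is not literally $a$ restricted to $\Lambda\setminus\Lambda_s$ — the least-squares update also puts mass on $\Lambda_s$ — so one must argue through the inequality $\|a_s\|_2\ge|a_{\min}|$ rather than any exact identity for $\|a_s\|_2$.
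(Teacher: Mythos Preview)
Your plan diverges from the paper's in a way that costs you exactly the constants you are worried about. You take the lower bound from inequality~\eqref{eqn:inner_product_upper_bound}, obtaining
\[
\max_{i\in\Lambda}|\langle r_s,\phi_i\rangle|\ \ge\ \frac{(1-\delta_k)}{\sqrt{k}}\|a_s\|_2-\epsilon,
\]
and then try to repair the mismatch between $\|a_s\|_2$ and $\|r_s\|_2$ by invoking the residual identity $\|r_s\|_2^2=\langle r_s,f\rangle$. This works in spirit, but when you unwind it you pick up $\|r_s\|_2-\epsilon\ge\|\Phi a_s\|_2-2\epsilon$ (the noise is counted twice: once inside $r_s$, once in $\langle r_s,w\rangle$), and the resulting sufficient condition on $t$ is \emph{not} implied by~\eqref{eqn:t_upperbound_noisy} for all admissible $\epsilon$. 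So the final ``collapsing'' step you flag as the main obstacle really does fail.

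The paper's route is shorter and avoids the problem entirely by keeping $\|\Phi a_s\|_2$, rather than $\|a_s\|_2$, as the intermediate quantity on \emph{both} sides. The lower bound is the sharper form
\[
\max_{i\in\Lambda}|\langle r_s,\phi_i\rangle|\ \ge\ \frac{\sqrt{1-\delta_k}}{\sqrt{k-s}}\,\|\Phi a_s\|_2-\epsilon,
\]
obtained by expanding $\|\Phi a_s\|_2^2=\sum_i (a_s)_i\langle\Phi a_s,\phi_i\rangle$ (this is the ``residual identity'', but applied to the clean part $\Phi a_s$, and using that only the $k-s$ inner products over $\Lambda\setminus\Lambda_s$ survive). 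The upper bound is just the triangle inequality $\|r_s\|_2\le\|\Phi a_s\|_2+\epsilon$. Since both sides are now linear in $\|\Phi a_s\|_2$, one substitutes the worst case $\|\Phi a_s\|_2\ge\sqrt{(k-s)(1-\delta_k)}\,|a_{\min}|$ and reads off~\eqref{eqn:t_upperbound_noisy} directly, without ever touching $1+\delta_k$ or a second $\epsilon$. In short: your instinct to use the residual identity is right, but it should drive the \emph{lower} bound on the inner products (stopping before you convert $\|\Phi a_s\|_2$ to $\|a_s\|_2$), not serve as a substitute for the upper bound on $\|r_s\|_2$.
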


\begin{thm}\label{thm:ompt_noisy}
   Denote by $\hat{a}_{\emph{ompt}}$ the recovered signal from
  $f$ by OMPT after $k$ iterations. If
  \begin{align}
    \delta_k + \sqrt{k}\nu_k < 1 \label{eqn:thm_ompt_bound1_noisy}
  \end{align}
  and the noise level obeys
  \begin{align}
    \epsilon < \frac{\sqrt{1-\delta_k} (1-\delta_k -
      \sqrt{k}\nu_k)}{(\sqrt{k}+1)\sqrt{1-\delta_k} +      
       (1-\delta_k) + \nu_k}|a_{\emph{min}}|, \label{eqn:thm_ompt_bound2_noisy}
  \end{align}
  then there exists threshold $t$ satisfying conditions
  \eqref{eqn:t_lowerbound_noisy} and \eqref{eqn:t_upperbound_noisy}.
  Moreover, we have
  \begin{enumerate}[a)]
        \item $\hat{a}_{\emph{ompt}}$ has the correct sparsity pattern, that is
            \begin{equation*}
                \supp(\hat{a}_{\emph{ompt}}) = \supp(a);
            \end{equation*}
         \item $\hat{a}_{\emph{ompt}}$ approximates the ideal
           noiseless representation and
             \begin{equation}
                 \|\hat{a}_{\emph{ompt}} - a\|_2^2 \leq
                 \frac{\epsilon^2}{1-\delta_k}. \label{eqn:thm_ompt_errorbound_noisy}
             \end{equation}
    \end{enumerate}
\end{thm}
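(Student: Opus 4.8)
The plan is to run OMPT for $k$ iterations and show, by induction on the iteration index $s$, that the algorithm picks a correct index from $\Lambda$ at every step, never terminates early, and never picks an index outside $\Lambda$. The two lemmas do almost all of the work: Lemma~\ref{lem:t_lowerbound_noisy} guarantees that no atom outside $\Lambda$ can meet the threshold, provided $t$ exceeds $(\nu_k|a_{\text{min}}|+\epsilon)/(\sqrt{1-\delta_k}|a_{\text{min}}|-\epsilon)$, and Lemma~\ref{lem:t_upperbound_noisy} guarantees that at least one atom \emph{in} $\Lambda$ does meet the threshold, provided $t\le ((1-\delta_k)|a_{\text{min}}|-\epsilon)/(\sqrt{k(1-\delta_k)}|a_{\text{min}}|+\epsilon)$. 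So the first task is to verify that the interval of admissible thresholds is nonempty exactly under hypothesis~\eqref{eqn:thm_ompt_bound2_noisy}: one sets the lower bound in~\eqref{eqn:t_lowerbound_noisy} less than the upper bound in~\eqref{eqn:t_upperbound_noisy}, clears denominators (both are positive once $\epsilon$ is small, which is itself implied by~\eqref{eqn:thm_ompt_bound2_noisy}), and checks that the resulting inequality in $\epsilon$ rearranges precisely to~\eqref{eqn:thm_ompt_bound2_noisy}. This is a routine but slightly fiddly algebraic manipulation using $\delta_k+\sqrt{k}\nu_k<1$; I would present it as a short computation.

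Next comes the induction that establishes part~(a). At iteration $s<k$, assume $\Lambda_s\subseteq\Lambda$. Because $x_s$ is the least-squares solution on $\Phi_{\Lambda_s}$, the residual can be written $r_s=\Phi a_s+w_s$ with $a_s$ supported on $\Lambda$, $\supp(a_s)\subseteq\Lambda$, and $\|w_s\|_2\le\|w\|_2\le\epsilon$ (since the least-squares projection is nonexpansive and $f=\Phi a+w$). The subtle point here — and the one I'd flag as the main obstacle — is controlling $|a_{s,\text{min}}|$, i.e.\ the smallest nonzero coordinate of the \emph{residual} signal $a_s$ on the not-yet-selected indices. The lemmas are stated in terms of the original $|a_{\text{min}}|$, so one must argue that the relevant quantity at each step is bounded below by $|a_{\text{min}}|$ minus an error controlled by $\epsilon/\sqrt{1-\delta_k}$ (essentially~\eqref{eqn:thm_ompt_errorbound_noisy} applied to the sub-problem), and that hypothesis~\eqref{eqn:thm_ompt_bound2_noisy} is exactly what keeps this perturbed quantity safely positive and large enough for the threshold window to remain valid throughout all $k$ iterations. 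Once that invariant is in place, Lemma~\ref{lem:t_lowerbound_noisy} forbids a wrong pick, Lemma~\ref{lem:t_upperbound_noisy} forces a right pick, so $|\Lambda_{s+1}|=|\Lambda_s|+1$ and $\Lambda_{s+1}\subseteq\Lambda$; after $k$ steps $\Lambda_k=\Lambda$, and since $\|r_k\|_2=\|w_k\|_2$ need only be checked against the stopping test, the while-loop exits with exactly the support of $a$. A separate small check confirms the loop does not stop before step $k$: as long as $\Lambda_s\subsetneq\Lambda$, Lemma~\ref{lem:t_upperbound_noisy} produces an atom with $|\langle r_s,\phi_i\rangle|\ge t\|r_s\|_2$, which forces $\|r_s\|_2>t\|f\|_2$ (otherwise the inner product could not exceed $t\|r_s\|_2$ and still be a genuine selection), keeping the loop alive.

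Finally, part~(b): once $\supp(\hat a_{\text{ompt}})=\Lambda$, the output is the least-squares solution $\hat a_{\Lambda}=(\Phi_\Lambda^T\Phi_\Lambda)^{-1}\Phi_\Lambda^T f$. Writing $f=\Phi_\Lambda a_\Lambda+w$ gives $\hat a_\Lambda-a_\Lambda=(\Phi_\Lambda^T\Phi_\Lambda)^{-1}\Phi_\Lambda^T w$. Then
\begin{align*}
  (1-\delta_k)\|\hat a_\Lambda-a_\Lambda\|_2^2
  &\le \|\Phi_\Lambda(\hat a_\Lambda-a_\Lambda)\|_2^2
   = \langle \Phi_\Lambda(\hat a_\Lambda-a_\Lambda),\, w\rangle \\
  &\le \|\Phi_\Lambda(\hat a_\Lambda-a_\Lambda)\|_2\,\|w\|_2
   \le \sqrt{1+\delta_k}\,\|\hat a_\Lambda-a_\Lambda\|_2\,\epsilon,
\end{align*}
using the RIP lower bound on the left, the normal equations $\Phi_\Lambda^T(\Phi_\Lambda(\hat a_\Lambda-a_\Lambda)-w)=0$ for the middle equality, Cauchy–Schwarz, and the RIP upper bound. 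A cleaner route avoids the $\sqrt{1+\delta_k}$: since $\Phi_\Lambda(\hat a_\Lambda-a_\Lambda)$ is the orthogonal projection of $w$ onto $\mathrm{range}(\Phi_\Lambda)$, we have $\|\Phi_\Lambda(\hat a_\Lambda-a_\Lambda)\|_2\le\|w\|_2\le\epsilon$, and combined with $(1-\delta_k)\|\hat a_\Lambda-a_\Lambda\|_2^2\le\|\Phi_\Lambda(\hat a_\Lambda-a_\Lambda)\|_2^2$ this yields $\|\hat a_{\text{ompt}}-a\|_2^2\le\epsilon^2/(1-\delta_k)$ directly, which is~\eqref{eqn:thm_ompt_errorbound_noisy}. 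The noiseless Theorem~\ref{thm:ompt_noiseless} then follows by setting $\epsilon=0$, under which~\eqref{eqn:thm_ompt_bound2_noisy} is vacuous and the threshold window~\eqref{eqn:ompt_thm_condition} is recovered from~\eqref{eqn:t_lowerbound_noisy}–\eqref{eqn:t_upperbound_noisy}.
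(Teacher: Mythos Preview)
Your overall plan matches the paper's: verify the threshold interval is nonempty under~\eqref{eqn:thm_ompt_bound2_noisy}, then invoke Lemmas~\ref{lem:t_lowerbound_noisy} and~\ref{lem:t_upperbound_noisy} inductively for part~(a), and for part~(b) use the RIP lower bound together with the fact that $\Phi_\Lambda(\hat a_\Lambda - a_\Lambda)$ is the orthogonal projection of $w$ onto $\mathrm{range}(\Phi_\Lambda)$. Your part~(b) is essentially identical to the paper's argument, which phrases it via $\|\Phi_\Lambda^\dagger\|_2 \le 1/\sigma_{\min}\le 1/\sqrt{1-\delta_k}$.

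The one place you diverge is the ``main obstacle'' you flag for part~(a): controlling $|a_{s,\min}|$ on the not-yet-selected indices by a perturbation of size roughly $\epsilon/\sqrt{1-\delta_k}$. This detour is unnecessary and, as you set it up, would not close under the exact hypothesis~\eqref{eqn:thm_ompt_bound2_noisy} (you would end up needing a strictly stronger bound on $\epsilon$). The point you are missing is that in the decomposition $r_s = \Phi a_s + w_s$ --- whether you take $w_s = w$ or $w_s = (I-P_s)w$ --- the coordinates of $a_s$ on $\Lambda\setminus\Lambda_s$ are \emph{exactly} those of $a$: the least-squares step only modifies coordinates indexed by $\Lambda_s$. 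Hence the minimum of $|(a_s)_j|$ over $j\in\Lambda\setminus\Lambda_s$ is at least $|a_{\min}|$ with no perturbation whatsoever, and in particular $\|a_s\|_2 \ge \sqrt{k-s}\,|a_{\min}|$. This inequality is precisely what the paper uses in the proofs of the two lemmas, and it is why those lemmas are stated with the \emph{original} $|a_{\min}|$ yet apply verbatim at every iteration $s$. Once you see this, the induction is immediate and there is no obstacle to flag.
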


Theorem~\ref{thm:ompt_noisy} basically says that, if the minimal nonzero
entry of the ideal noiseless sparse signal is large enough
compared to the noise level, then the correct support of the sparse
signal can be recovered exactly in $k$ iterations, and moreover, the error can be bounded by \eqref{eqn:thm_ompt_errorbound_noisy}. 

Next we study the convergence of the OMPT algorithm in presence of noise.


\begin{thm}\label{thm:ompt_convergence}
  Given a dictionary $\Phi$, take $\epsilon \ge 0$ and $f, f^\epsilon
  \in \mathbb{R}^n$ such that
  \begin{align*}
    \|f - f^\epsilon\| \le \epsilon, \quad\quad f^\epsilon /
    C(\epsilon) \in A_1(\Phi)
  \end{align*}
  with some constant $C(\epsilon) > 0$. Then OMPT stops after $m \le 
  \ln t^2 / \ln(1 - t^2)$ iterations with
  \begin{align*}
    \|r_m\| \le \epsilon + t C(\epsilon).
  \end{align*}
\end{thm}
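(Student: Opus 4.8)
The plan is to combine the geometric decay of the residual norm — standard for matching-pursuit schemes — with a lower bound on the largest correlation $\max_{i}|\langle r_s,\phi_i\rangle|$ that exploits the hypothesis $f^\epsilon/C(\epsilon)\in A_1(\Phi)$.

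First I would record the defining property of a least-squares residual: since $x_s=\argmin_z\|f-\Phi_{\Lambda_s}z\|$, the normal equations give $\Phi_{\Lambda_s}^T r_s=0$, so $r_s$ is orthogonal to every atom indexed by $\Lambda_s$, and in particular to $\Phi_{\Lambda_s}x_s$; hence $\|r_s\|^2=\langle r_s,f\rangle$. Writing $f=f^\epsilon+(f-f^\epsilon)$, using $\|f-f^\epsilon\|\le\epsilon$ and $f^\epsilon=C(\epsilon)g$ with $g=\sum_i c_i\phi_i\in A_1(\Phi)$, $\sum_i|c_i|\le 1$, Cauchy--Schwarz gives
\[
\|r_s\|^2\ \le\ \langle r_s,f^\epsilon\rangle+\epsilon\|r_s\|\ \le\ C(\epsilon)\sum_i|c_i|\,|\langle r_s,\phi_i\rangle|+\epsilon\|r_s\|\ \le\ C(\epsilon)\max_i|\langle r_s,\phi_i\rangle|+\epsilon\|r_s\|,
\]
so that
\[
\max_i|\langle r_s,\phi_i\rangle|\ \ge\ \frac{\|r_s\|\big(\|r_s\|-\epsilon\big)}{C(\epsilon)}.
\]
(If $A_1(\Phi)$ is taken as a genuine closure this passes through a routine limiting argument; for a finite dictionary it is vacuous.) I would also note the elementary fact that $\|f\|\le\|f^\epsilon\|+\epsilon\le C(\epsilon)+\epsilon$ since $\|g\|\le\sum_i|c_i|\le 1$, whence $t\|f\|\le\epsilon+tC(\epsilon)$ whenever $t\le 1$.

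Next I would establish the per-iteration dichotomy. Suppose $\|r_s\|>\epsilon+tC(\epsilon)$. Then the last display forces $\max_i|\langle r_s,\phi_i\rangle|>t\|r_s\|$, so step~4 admits a valid index, and moreover $\|r_s\|>\epsilon+tC(\epsilon)\ge t\|f\|$, so the while-loop is still active and the iteration is genuinely performed. For the chosen index $i$, comparing $x_{s+1}$ with $x_s$ augmented by a multiple of $\phi_i$ gives $\|r_{s+1}\|^2\le\min_c\|r_s-c\phi_i\|^2=\|r_s\|^2-|\langle r_s,\phi_i\rangle|^2\le(1-t^2)\|r_s\|^2$. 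Iterating from $r_0=f$ yields $\|r_s\|^2\le(1-t^2)^s\|f\|^2$ for every $s$ up to the stopping time $m$.

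Finally I would close the bookkeeping. For every $s<m$ the while condition $\|r_s\|>t\|f\|$ holds, so $t^2<(1-t^2)^s$ and hence $s<\ln t^2/\ln(1-t^2)$; this gives $m\le\lceil\ln t^2/\ln(1-t^2)\rceil$. At the stopping iteration, either the while test failed, so $\|r_m\|\le t\|f\|\le\epsilon+tC(\epsilon)$, or no atom meets the threshold, i.e.\ $\max_i|\langle r_m,\phi_i\rangle|<t\|r_m\|$, which together with the lower bound from the first step forces $\|r_m\|-\epsilon<tC(\epsilon)$; in either case $\|r_m\|\le\epsilon+tC(\epsilon)$. I expect the main obstacle to be the second step: pinning down the lower bound on $\max_i|\langle r_s,\phi_i\rangle|$ from the $A_1(\Phi)$ hypothesis and checking it is consistent with both possible stopping mechanisms — the residual dropping below $t\|f\|$ and the thresholding step running out of candidates. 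The geometric-series tail estimate is then the routine part, essentially as in the convergence analyses of MP and OMP.
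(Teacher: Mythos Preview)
Your proposal is correct and follows essentially the same approach as the paper: the identity $\|r_s\|^2=\langle r_s,f\rangle$ from the normal equations, the decomposition $f=(f-f^\epsilon)+f^\epsilon$ with the $A_1(\Phi)$ bound $\sum_i|c_i|\le 1$, the geometric decay $\|r_{s+1}\|^2\le(1-t^2)\|r_s\|^2$, and the separate treatment of the two stopping mechanisms are exactly the paper's ingredients. Your framing via the lower bound $\max_i|\langle r_s,\phi_i\rangle|\ge \|r_s\|(\|r_s\|-\epsilon)/C(\epsilon)$ and the resulting dichotomy is a slightly cleaner reorganization of the same computation the paper carries out directly on the terminal residual, but there is no substantive difference.
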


\begin{rmk}
Note that from the proof of Theorem~\ref{thm:ompt_convergence} (see
Appendix), it is clear that the residual term $r_m$ converges in an exponential rate. In addition, the
results in Theorem~\ref{thm:ompt_convergence} can be easily extended
to any Hilbert space.
\end{rmk}
\my{
\begin{rmk}
  Also note that when we choose $t^2$ to be close to
  $1/k$ as in Lemma~\ref{lem:t_upperbound_noisy}, the bound $\ln t^2 / \ln (1-t^2)$ for the number of
  iterations $m$ is roughly $k \ln k$. This can be seen by using the fact that $(1 - 1/k)^k$ is approximately $1/e$.
\end{rmk}
}



\mymy{
\section{Simulations}
In this section, we present simulations that compare the
reconstruction performance and the complexity of OMPT with
OMP. For this purpose, we use similar setup to that given in~\cite{DET}. Specifically, we work
with a dictionary $\Phi = [I, F] \in \mathbb{R}^{128 \times 256}$, concatenating a standard basis and a
Fourier basis for signals of length $128$ together. The sparse signals we
test are obtained by randomly choosing locations for
nonzero entries, and then assigning values from the uniform
distribution to these locations. We perform 1000 trials for each of such
sparse signals generated with certain sparsity level.

We first examine the reconstruction performance of OMPT and compare it
with that of OMP. For simplicity, we only consider the metric
using mutual coherence $M$. In particular, we choose for OMPT the
thresholding parameter $t = \sqrt{M} = 1/\sqrt{128}$, which satisfies the
condition~\eqref{eqn:ompt_cor_condition_t} for sparsity $k = 1,
\ldots, 6$ satisfying
condition~\eqref{eqn:ompt_cor_condition_M}. Figure~\ref{fig:performance}
shows the reconstruction performance of OMPT under the above settings
and that of its counter part OMP. The $x$-axis is the sparsity level and the
$y$-axis is the expected value of success. We see from the figure that OMPT has
similar performance as OMP. In particular, OMPT performs better than
OMP when sparsity $k$ is less than 30. Moreover, OMPT does not fail
when $k$ is no more than 20.

\begin{figure}[ht!]
  \centering
  \includegraphics[scale=.4]{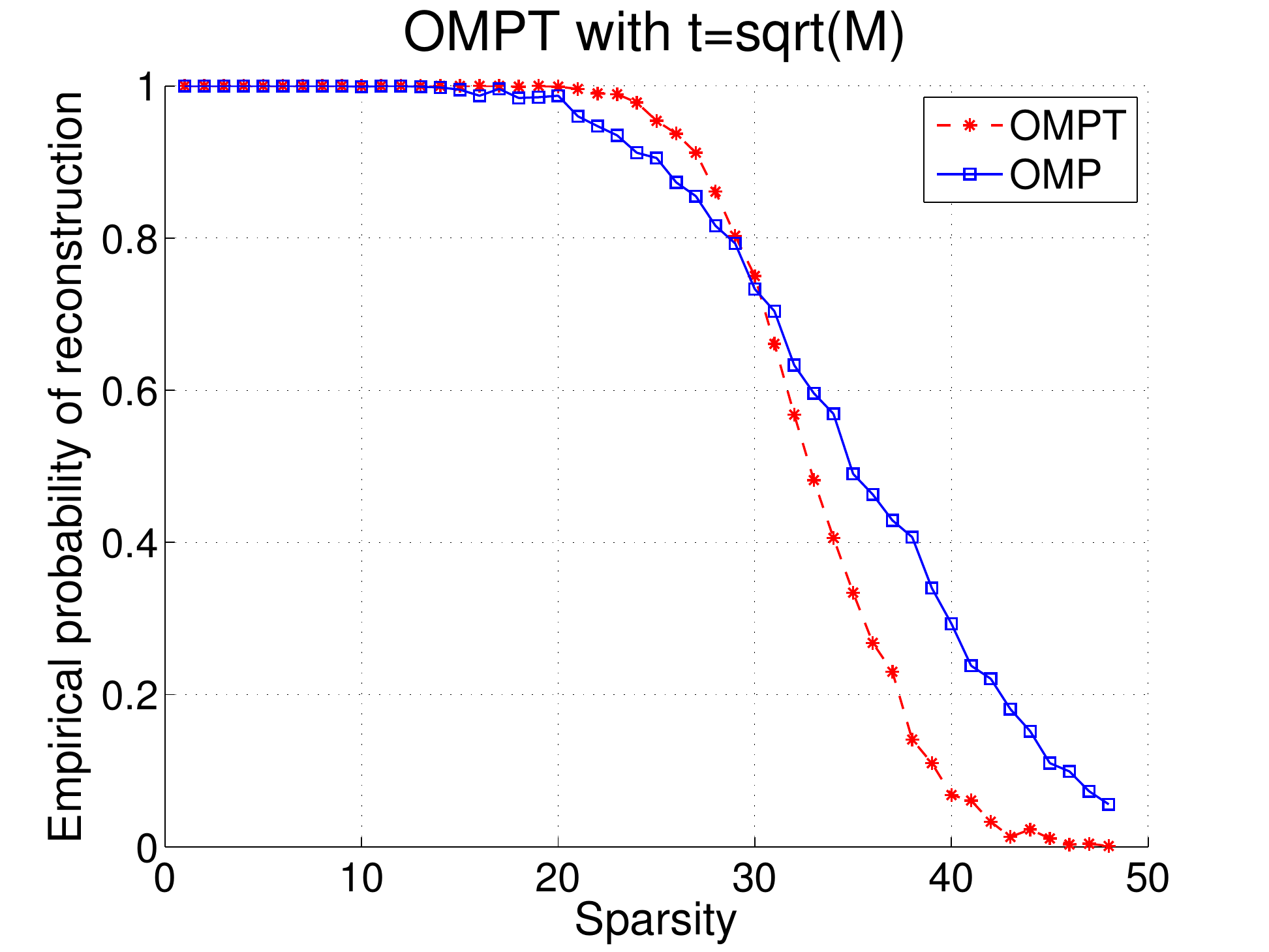}
  \caption{Average reconstruction performance of OMPT and OMP for
    sparse signals of length 256 in
    1000 trials.}
  \label{fig:performance}
\end{figure}

We next compare the complexity of OMPT and OMP in terms of number of
inner products needed. For OMP, one can actually calculate the number
of inner products, which equals $k(2d-k+1)/2$. It increases as
sparsity $k$ increases, as shown in Figure~\ref{fig:complexity}. In
contrast, we count the number of inner products performed by OMPT for
1000 trials. The average is shown in
Figure~\ref{fig:complexity}. As one can see, it stays flat as $k$
increases and is significantly less than the number of inner products needed
for OMP.

\begin{figure}[ht!]
  \centering
  \includegraphics[scale=.4]{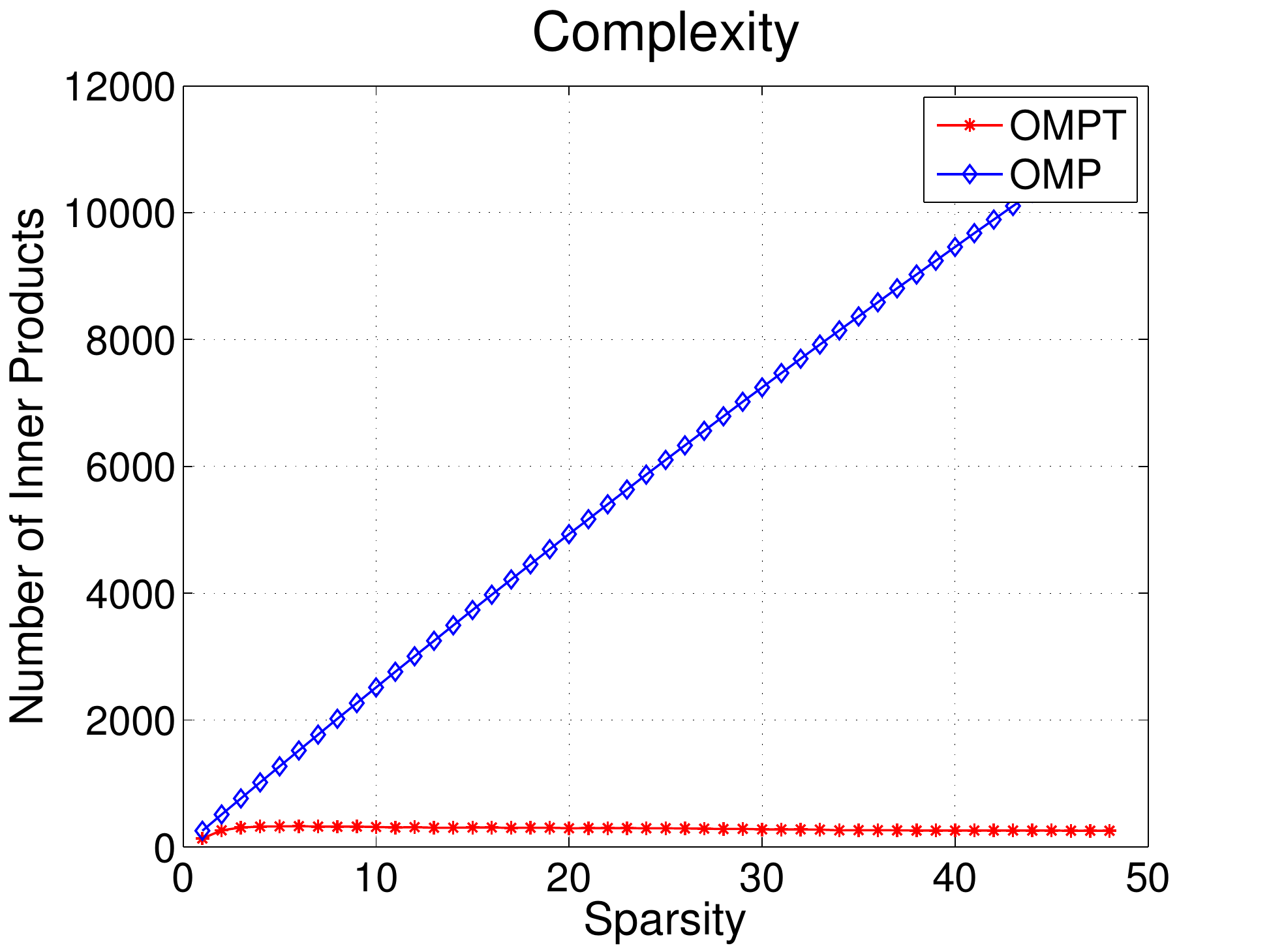}
  \caption{Average number of inner products performed by OMPT and OMP
    in 1000 trials.}
  \label{fig:complexity}
\end{figure}

}

\section{Conclusion}
In this paper, we have analyzed a less greedy algorithm, orthogonal matching pursuit with
thresholding and its performance for reconstructing sparse signals,
for both noisy and noiseless cases. It
replaces the expensive greedy step in orthogonal matching pursuit with
a thresholding step, making it a potentially  more attractive option
in practice. By analysing different metrics for the sampling matrix
such as the RIC, mutual coherence and global 2-coherence, we showed
that although the expensive greedy step is replaced, this simplified
algorithm has the same recovery performance as orthogonal matching
pursuit for sparse signal reconstruction.


%



\appendices

\section{Proof of expression~\eqref{eqn:omega}}
\begin{proof}
  \begin{align*}
    \max_{\phi_i, i\in S} \frac{|\langle \Phi x, \phi_i
    \rangle|}{\|x\|_2}
    &=
    \frac{\|\Phi_S^T \Phi_S x_S\|_\infty}{\|x_S\|_2} \\
    &=
    \frac{\|\Phi_S^T \Phi_S x_S\|_\infty}{\|(\Phi_S^T
      \Phi_S)^{-1}\Phi_S^T \Phi_S x_S\|_2} \\
    &\ge
    \frac{1}{\|(\Phi_S^T \Phi_S)^{-1}\|_{\infty, 2}}.
  \end{align*}
  Therefore,
  \begin{align*}
    \omega_k(\Phi) = \min_{\substack{S \subset [d] \\ |S|=k}}
    \frac{1}{\|(\Phi_S^T \Phi_S)^{-1}\|_{\infty, 2}}.
  \end{align*}
\end{proof}

\section{Proof of Theorem~\ref{thm:ompt_noisy}}

The proof of Lemma~\ref{lem:t_lowerbound_noisy}
and Lemma~\ref{lem:t_upperbound_noisy} will need the following lemma.

\begin{lem}\label{lem:bounds_noisy}
  Let $\Lambda \subset [d]$ with $|\Lambda| = k$. Let $f
  = \Phi a + w$ with $\supp(a) = \Lambda$ and $\|w\|_2 \le
  \epsilon$. In addition, assume that there exits $\Omega \subseteq
  \Lambda$ with $|\Omega| = m$, such that
  \begin{align*}
    \langle \Phi a, \phi_i \rangle = 0, \mbox{ for $i\in \Lambda \setminus
      \Omega$}.
  \end{align*}
  Then
  \begin{align*}
    \max_{i \in [d]\setminus\Lambda} |\langle f,\phi_i \rangle|
    &\le
    \nu_k \|a\|_2 + \epsilon, \\
    \max_{i \in \Lambda} |\langle f,\phi_i \rangle|
    &\ge
    \frac{\sqrt{1-\delta_k}}{\sqrt{m}}  \|\Phi a\|_2 - \epsilon.
  \end{align*}
\end{lem}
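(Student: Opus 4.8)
The plan is to prove the two inequalities separately, but in both cases to start from the splitting $\langle f,\phi_i\rangle=\langle \Phi a,\phi_i\rangle+\langle w,\phi_i\rangle$ and to dispose of the perturbation term uniformly: since the columns of $\Phi$ are normalized, Cauchy--Schwarz gives $|\langle w,\phi_i\rangle|\le\|w\|_2\le\epsilon$ for every $i$. So everything reduces to estimating the noiseless inner products $\langle\Phi a,\phi_i\rangle$, and the $\epsilon$ simply gets added (resp.\ subtracted) at the very end.

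For the first (upper) bound, fix $i\in[d]\setminus\Lambda$ and write $\langle\Phi a,\phi_i\rangle=\sum_{j\in\Lambda}a_j\langle\phi_j,\phi_i\rangle$. Applying Cauchy--Schwarz and then Definition~\ref{def:newcoherence}, using $|\Lambda|=k$ and $i\notin\Lambda$, yields $|\langle\Phi a,\phi_i\rangle|\le\|a_\Lambda\|_2\bigl(\sum_{j\in\Lambda}\langle\phi_i,\phi_j\rangle^2\bigr)^{1/2}\le\nu_k\|a\|_2$. Adding $\epsilon$ and maximizing over $i\notin\Lambda$ gives the first claim; this is really just the first of the two inequalities recorded near~\eqref{eqn:inner_product_upper_bound}.

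For the second (lower) bound, which is where the real content lies, I would work with the vector $v:=\Phi_\Lambda^T\Phi_\Lambda a_\Lambda\in\mathbb{R}^k$. Since $\Phi a=\Phi_\Lambda a_\Lambda$, the entry of $v$ indexed by $i\in\Lambda$ is exactly $\langle\Phi a,\phi_i\rangle$, so the standing hypothesis says precisely that $v$ is supported on $\Omega$ with $|\Omega|=m$. Hence $\max_{i\in\Lambda}|\langle\Phi a,\phi_i\rangle|=\|v\|_\infty\ge\|v\|_2/\sqrt{m}$, the last step being the elementary $\ell_\infty$--$\ell_2$ comparison for a vector with at most $m$ nonzero entries. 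It then remains to show $\|v\|_2\ge\sqrt{1-\delta_k}\,\|\Phi a\|_2$. Writing $G:=\Phi_\Lambda^T\Phi_\Lambda$, the restricted isometry property of order $k$ (valid since $|\Lambda|=k$) gives $G\succeq(1-\delta_k)I$; as $G$ and $G-(1-\delta_k)I$ are commuting positive semidefinite matrices, $G^2\succeq(1-\delta_k)G$, so that $\|v\|_2^2=a_\Lambda^T G^2 a_\Lambda\ge(1-\delta_k)\,a_\Lambda^T G a_\Lambda=(1-\delta_k)\|\Phi a\|_2^2$. Combining the two estimates gives $\max_{i\in\Lambda}|\langle\Phi a,\phi_i\rangle|\ge\frac{\sqrt{1-\delta_k}}{\sqrt{m}}\|\Phi a\|_2$; choosing an index $i^\ast\in\Omega$ attaining this maximum and using $|\langle f,\phi_{i^\ast}\rangle|\ge|\langle\Phi a,\phi_{i^\ast}\rangle|-|\langle w,\phi_{i^\ast}\rangle|\ge\frac{\sqrt{1-\delta_k}}{\sqrt m}\|\Phi a\|_2-\epsilon$ completes the proof.

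The main obstacle is the lower bound, specifically producing the factor $\sqrt{1-\delta_k}$ (rather than the weaker $1-\delta_k$) alongside the sharper denominator $\sqrt m$ in place of $\sqrt k$. Two observations make this work: (i) the hypothesis on $\Omega$ is exactly the statement that the Gram-times-coefficient vector $v$ is $m$-sparse, so the $\ell_\infty$--$\ell_2$ comparison costs only $\sqrt m$; and (ii) the operator inequality $G^2\succeq(1-\delta_k)G$, which effectively peels off one factor $\|\Phi a\|_2=\|G^{1/2}a_\Lambda\|_2$ and one factor $\sqrt{1-\delta_k}$ from $\|Ga_\Lambda\|_2$. Everything else is Cauchy--Schwarz, column normalization, and the triangle inequality.
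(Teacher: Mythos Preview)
Your proof is correct. The upper bound is handled identically to the paper. For the lower bound the paper takes a slightly different and more elementary route: it writes
\[
\|\Phi a\|_2^2=\sum_{i\in\Omega}a_i\langle\Phi a,\phi_i\rangle\le\Bigl(\sum_{i\in\Omega}|a_i|\Bigr)\max_{i\in\Lambda}|\langle\Phi a,\phi_i\rangle|\le\sqrt{m}\,\|a\|_2\,\max_{i\in\Lambda}|\langle\Phi a,\phi_i\rangle|,
\]
and then invokes RIP in the scalar form $\|a\|_2\le\|\Phi a\|_2/\sqrt{1-\delta_k}$ to cancel one power of $\|\Phi a\|_2$. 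You instead use the $m$-sparsity of $v=\Phi_\Lambda^T\Phi_\Lambda a_\Lambda$ to get $\|v\|_\infty\ge\|v\|_2/\sqrt{m}$, and then the spectral inequality $G^2\succeq(1-\delta_k)G$ to obtain $\|v\|_2\ge\sqrt{1-\delta_k}\,\|\Phi a\|_2$. Both routes yield the same constant; the paper's argument is a touch more direct since it only needs the one-line RIP bound on $\|a\|_2$, while your operator-theoretic version makes it structurally clear why the factor is $\sqrt{1-\delta_k}$ rather than $1-\delta_k$ and would generalize more readily to other spectral conditions on the Gram matrix.
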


\begin{proof}
  For $i \in [d]\setminus\Lambda$, we have
  \begin{align*}
    |\langle f,\phi_i \rangle|
    &=
    | \langle \Phi a + w, \phi_i \rangle| \\
    &\le
   |\langle \Phi a,\phi_i \rangle| + |\langle w,\phi_i \rangle|\\
   &\le
   \left| \sum_{j\in\Lambda} a_j \langle \phi_j, \phi_i \rangle
   \right| + \|w\|_2 \|\phi_i\|_2\\
   &\le
   \|a\|_2 \left( \sum_{j\in\Lambda} |\langle \phi_j, \phi_i
     \rangle|^2 \right)^{\frac{1}{2}} + \epsilon\\
   &\le
   \nu_k \|a\|_2 + \epsilon.
 \end{align*}
	Taking maximum on both sides completes the proof of the first inequality.

	Now for $i \in \Lambda$, we have
        \begin{align*}
          \|\Phi a\|_2^2
          &=
          \langle \Phi a, \sum_{i\in\Lambda} a_i \phi_i \rangle \\
          &=
          \sum_{i\in\Lambda} a_i \langle \Phi a, \phi_i \rangle \\
          &=
          \sum_{i\in\Omega} a_i \langle \Phi a, \phi_i \rangle \\
          &\le
          \sum_{i \in \Omega} |a_i|\cdot |\langle \Phi a,\phi_i \rangle| \\
          &\le
          \sqrt{m} \|a\|_2 \max_{i \in \Lambda} |\langle \Phi a,\phi_i \rangle|.
        \end{align*}
        By the definition of the RIC $\delta_k$,
        \begin{align*}
          \max_{i \in \Lambda} |\langle \Phi a , \phi_i \rangle| \ge \frac{\sqrt{1-\delta_k}}{\sqrt{m}} \|\Phi a\|_2.
        \end{align*}
        Therefore,
	\begin{align*}
          \max_{i \in \Lambda} |\langle f,\phi_i \rangle|
          &=
          \max_{i \in \Lambda} |\langle \Phi a + w,\phi_i \rangle|  \\
          &\ge
          \max_{i \in \Lambda} |\langle \Phi a , \phi_i \rangle| -
          \max_{i \in \Lambda} |\langle w, \phi_i \rangle| \\
          &\ge
          \frac{\sqrt{1-\delta_k}}{\sqrt{m}} \|\Phi a\|_2 - \max_{i
            \in \Lambda} \|w\|_2 \|\phi_i\|_2 \\
          &\ge
          \frac{\sqrt{1-\delta_k}}{\sqrt{m}} \|\Phi a\|_2 - \epsilon.
        \end{align*}
        This completes the proof of the second inequality.
\end{proof}

Now we are ready to complete the proof of
Lemma~\ref{lem:t_lowerbound_noisy},
Lemma~\ref{lem:t_upperbound_noisy}, and Theorem~\ref{thm:ompt_noisy}.

\begin{proof}[Proof of Lemma~\ref{lem:t_lowerbound_noisy}]
  By using Lemma~\ref{lem:bounds_noisy}, assumption~\eqref{eqn:t_lowerbound_noisy}, and the fact that
  $\|a_s\|_2 \ge \sqrt{k-s} |a_{\text{min}}|$, it is easy to derive
  \begin{align*}
    \max_{i \in [d]\setminus\Lambda} |\langle r_s,\phi_i \rangle|
    &\le
    \nu_k \|a_s\|_2 + \epsilon \\
    &<
    t \left( \sqrt{1-\delta_k}\|a_s\|_2 - \epsilon \right) \\
    &\le
    t \left( \|\Phi a_s\|_2 -\epsilon \right) \\
    &\le
    t \|r_s\|_2.
  \end{align*}
\end{proof}

\begin{proof}[Proof of Lemma~\ref{lem:t_upperbound_noisy}]
  By using Lemma~\ref{lem:bounds_noisy}, assumption~\eqref{eqn:t_upperbound_noisy}
  and the inequality
  \begin{align*}
    \sqrt{(k-s)(1-\delta_k)} |a_{\text{min}}| \le \sqrt{1-\delta_k}\|a\|_2 \le
    \|\Phi a\|_2
  \end{align*}
  we have
  \begin{align*}
    \max_{i \in \Lambda} |\langle r_s,\phi_i \rangle| 
    &\ge
    \frac{\sqrt{1-\delta_k}}{\sqrt{k-s}} \|\Phi a_s\|_2 - \epsilon \\
    &\ge
    t (\|\Phi a_s\|_2 + \epsilon) \\
    &\ge
    t \|r_s\|_2.
  \end{align*}
\end{proof}

\begin{proof}[Proof of Theorem~\ref{thm:ompt_noisy}]
  First we show that $a_{\text{ompt}}$ has the correct support.

  We start with the first iteration. Combining conditions
  \eqref{eqn:thm_ompt_bound1_noisy} and
  \eqref{eqn:thm_ompt_bound2_noisy},
  Lemma~\ref{lem:t_lowerbound_noisy} and
  Lemma~\ref{lem:t_upperbound_noisy}, it is easy to see that OMPT is
  able to select and only select an atom $\phi_i$ with $i \in
  \Lambda$. Condition~\eqref{eqn:thm_ompt_bound1_noisy} and
  ~\eqref{eqn:thm_ompt_bound2_noisy} guarantees the existance of
  threshold $t$ satisfying Lemma~\ref{lem:t_lowerbound_noisy} and
  Lemma~\ref{lem:t_upperbound_noisy}. Lemma~\ref{lem:t_lowerbound_noisy}
  guarantees that OMPT will not choose any atom $\phi_i$ for $i \in
  [d]\setminus\Lambda$. While Lemma~\ref{lem:t_upperbound_noisy} guarantees that
  OMPT is able to choose atoms $\phi_i$ with $i \in \Lambda$.

  Next we argue that by repeatedly applying
  Lemma~\ref{lem:t_lowerbound_noisy} and
  Lemma~\ref{lem:t_upperbound_noisy}, we are
  able to correctly recover
  the support of $a$. In fact, in each iteration, we have the same
  situation as in the first iteration. In addition, the orthogonal
  projection step guarantees that the procedure will not repeat the
  atoms already chosen in previous iterations. Thus, all the correct
  support of the noiseless sparse signal $a$ can be recovered
  precisely after $k$ iterations.

  Next, we will prove the error
  bound~\eqref{eqn:thm_ompt_errorbound_noisy}. The proof follows the
  idea of Theorem 5.1 in \cite{DET}. Let $a_T$ denote $a$ restricted
  to its support. Similarly, let $\Phi_T$ denote the dictionary $\Phi$
  restricted to the support of $a$. The orthogonal projection step
  tells that OMPT solves for
  \begin{align*}
    \hat{a}_T = \argmin_{a_T} \| f - \Phi_T a_T\|_2 = \Phi_T^\dagger f
  \end{align*}
  where $\Phi_T^\dagger$ denotes the Moore-Penrose generalized inverse
  of $\Phi_T$. Then we have
  \begin{align*}
    \hat{a}_T
    &=
    \Phi_T^\dagger f \\
    &=
    \Phi_T^\dagger (\Phi a + w) \\
    &=
    \Phi_T^\dagger (\Phi_T a_T + w) \\
    &=
    a_T + \Phi_T^\dagger w.
  \end{align*}
  The term $\Phi_T^\dagger w$ denotes the reconstruction error. It can
  be bounded by
  \begin{align*}
    \|\hat{a}_{\text{ompt}} - a\|_2
    &=
    \|\hat{a}_T - a_T\|_2 \\
    &=
    \|\Phi_T^\dagger w\|_2 \\
    &\le
    \|\Phi_T^\dagger\|_2 \cdot \|w\|_2 \\
    &\le
    \epsilon/\sigma_{\text{min}}
  \end{align*}
  where we bound the norm of $\Phi_T^\dagger$ by the smallest singular
  value $\sigma_{\text{min}}$ of $\Phi$. Now by RIP, we have $\sigma_{\text{min}}^2
  \ge 1 - \delta_k$, and the error
  bound~\eqref{eqn:thm_ompt_errorbound_noisy} follows.
\end{proof}

  \section{Proof of Theorem~\ref{thm:ompt_convergence}}
  \begin{proof}[Proof of Theorem~\ref{thm:ompt_convergence}]
  If the stopping criteria $\|r_m\| \le t \|f\|$ is met, then the
  error estimation follow from the simple inequalities
  \begin{align*}
    \|f\| \le \|f - f^\epsilon\| + \|f^\epsilon\| \le \epsilon + C(\epsilon).
  \end{align*}
  Now assume the stopping criteria $|\langle r_m, \phi
  \rangle| < t \|r_m\|$ has been met for all $\phi \in \Phi$, and
  denote $G_m$ the approximant after $m$ iterations. Then
  \begin{align*}
    \|r_m\|^2
    &=
    \langle r_m, f - G_m \rangle = \langle r_m, f\rangle \\
    &=
    \langle r_m, f - f^\epsilon \rangle + \langle r_m, f^\epsilon
    \rangle \\
    &\le
    \|r_m\|\cdot \|f-f^\epsilon\| + C(\epsilon) \langle r_m,
    \sum_{\phi_i\in\Phi} a_i \phi_i \rangle \\
    &\le
    \epsilon \|r_m\| + C(\epsilon) \sum_{\phi_i\in\Phi} a_i
    \langle r_m, \phi_i \rangle \\
    &\le
    \epsilon \|r_m\| + t C(\epsilon) \|r_m\| \sum_{\phi_i\in\Phi} |a_i|\\
    &\le
    \epsilon \|r_m\| + t C(\epsilon)\|r_m\|.
  \end{align*}
  Therefore, we obtain the bound
  \begin{align*}
    \|r_m\| \le \epsilon + t C(\epsilon).
  \end{align*}
  Next, we prove the bound on the number of iterations. Suppose we are
  at the $k$-th iteration and have found $\phi_i$ such that
  \begin{align*}
    |\langle \phi_i, r_{k-1} \rangle| \ge t\|r_{k-1}\|.
  \end{align*}
  We now update the support
  \begin{align*}
    \Lambda_k = \Lambda_{k-1} \cup \{i\}.
  \end{align*}
  and calculate the new approximant
  \begin{align*}
    G_k = \Phi_{\Lambda_k} \cdot \arg\min_z \|f - \Phi_{\Lambda_k} z\|^2.
  \end{align*}
  Then
  \begin{align*}
    \|r_k\|^2
    &=
    \min_z \|f - \Phi_{\Lambda_k} z\|^2 \\
    &\le
    \|r_{k-1} - \langle r_{k-1}, \phi_i \rangle \phi_i \|^2 \\
    &=
    \|r_{k-1}\|^2 - \langle r_{k-1}, \phi_i \rangle^2 \\
    &\le
    \|r_{k-1}\|^2 - t^2 \|r_{k-1}\|^2 \\
    &=
    (1-t^2) \|r_{k-1}\|^2.
  \end{align*}
  Hence
  \begin{align*}
    \|r_k\| \le (1-t^2)^{\frac{1}{2}} \|r_{k-1}\|,
  \end{align*}
  which implies
  \begin{align*}
    \|r_k\| \le (1-t^2)^{\frac{k}{2}} \|f\|. 
  \end{align*}
  
    From the stopping criteria $\|r_m\| \le t \|f\|$, we know that the
    algorithm will stop after $m$ iterations where $m$ is the smallest
    integer satisfying
    \begin{align*}
      \left( 1 - t^2 \right)^{\frac{k}{2}} \le t.
    \end{align*}
    This implies
    \begin{align*}
      m \le \frac{2\ln t}{\ln(1 - t^2)}.
    \end{align*}
  \end{proof}



\ifCLASSOPTIONcaptionsoff
  \newpage
\fi



\bibliographystyle{IEEEtran}
\bibliography{ga}
\end{document}